\newcommand{\ignore}[1]{}
\newcommand{\rev}[1]{#1}
\newcommand{\newrev}[1]{#1}
\numberwithin{equation}{section}
\newcommand{\Z}{\mathbb{Z}}
\renewcommand{\paragraph}[1]{\medskip\noindent {\bf {#1}}}
\newcommand{\calA}{\ensuremath{\mathcal{A}}}
\newcommand{\calB}{\ensuremath{\mathcal{B}}}
\newcommand{\calC}{\ensuremath{\mathcal{C}}}
\newcommand{\calO}{\ensuremath{\mathcal{O}}}
\newcommand{\calS}{\ensuremath{\mathcal{S}}}
\newcommand{\zo}{\ensuremath{\{0,1\}}} 
  \theoremstyle{plain} 
  \newtheorem{theorem}{Theorem}[section]
  \theoremstyle{definition}
  \newtheorem{definition}[theorem]{Definition}  
\newtheorem{construction}[theorem]{Construction}
\newcommand{\esm}[1]{\ensuremath{#1}}
\newcommand{\ms}[1]{\esm{\mathsf{#1}}}
\newcommand{\poly}{\ms{poly}}
\newcommand{\negl}{\ms{negl}}
\newcommand{\zostar}{\zo^*}
\newcommand{\etal}{et~al.\xspace}
\newcommand{\rgets}{\mathrel{\mathpalette\rgetscmd\relax}}
\newcommand{\rgetscmd}{\ooalign{$\leftarrow$\cr
    \hidewidth\raisebox{1.2\height}{\scalebox{0.5}{\ \rm R}}\hidewidth\cr}}
\newcommand{\getsr}{\rgets}
\begin{document}

\title{Fast Privacy-Preserving Punch Cards}

\author{\IEEEauthorblockN{Saba Eskandarian}
\IEEEauthorblockA{Stanford University\\
\texttt{saba@cs.stanford.edu}}}


\ignore{
\IEEEoverridecommandlockouts
\makeatletter\def\@IEEEpubidpullup{6.5\baselineskip}\makeatother
\IEEEpubid{\parbox{\columnwidth}{
    Network and Distributed Systems Security (NDSS) Symposium 2020\\
    23-26 February 2020, San Diego, CA, USA\\
    ISBN 1-891562-61-4\\
    https://dx.doi.org/10.14722/ndss.2020.23xxx\\
    www.ndss-symposium.org
}
\hspace{\columnsep}\makebox[\columnwidth]{}}
}

\maketitle

\begin{abstract}
Loyalty programs in the form of punch cards that can be redeemed for benefits have long been a ubiquitous element of the consumer landscape. However, their increasingly popular digital equivalents, while providing more convenience and better bookkeeping, pose a considerable privacy risk. 
This paper introduces a privacy-preserving punch card protocol that allows firms to digitize their loyalty programs without forcing customers to submit to corporate surveillance. We also present a number of extensions that allow our scheme to provide other privacy-preserving customer loyalty features. 

Compared to the best prior work, we achieve a $14\times$ reduction in the computation and a $11\times$ reduction in the communication required to perform a ``hole punch,'' a $55\times$ reduction in the communication required to redeem a punch card, and a $128\times$ reduction in the computation time required to redeem a card. 
Much of our performance improvement can be attributed to removing the reliance on pairings or range proofs present in prior work, which has only addressed this problem in the context of more general loyalty systems. By tailoring our scheme to punch cards and related loyalty systems, we demonstrate that we can reduce communication and computation costs by orders of magnitude. 
\end{abstract}


\section{Introduction}\label{intro}

Punch cards that can be redeemed for rewards after a number of purchases are a widely-used incentive for customer loyalty. Although these time-tested loyalty schemes remain popular, they are increasingly being replaced with digital equivalents that reside in mobile apps instead of physical wallets. The benefits of going digital for business owners include stronger defenses against counterfeit cards, a more convenient customer experience, and better bookkeeping around the popularity and efficacy of their loyalty programs~\cite{article1,article2}. 

Unfortunately, digital loyalty programs also introduce myriad new opportunities for customers' privacy to be violated~\cite{article2,article3}, e.g., by linking customer behavior across transactions.
This kind of tracking can be conducted by the business itself, a third-party loyalty service, or a malicious actor who gains access in a data breach. 
Thus any firm who wants to protect customer privacy should attempt to ensure that its digital loyalty program does not collect unnecessary data. But is it possible to digitize the traditional punch card without damaging customer privacy?

\rev{One solution to this problem is to rely on classical techniques such as blind signatures, anonymous credentials, or Ecash~\cite{Chaum82,Chaum85,CL01,CL04,CHL05,BCKL09}. As a simple example, consider a scheme where, to give a customer a hole punch, the server produces a blind signature on a secret chosen by the customer. After the customer has accrued enough signatures, it reveals all of them to the server, who keeps a database of used secrets to prevent double spending. While this kind of scheme works, it requires storage linear in the number of punches \newrev{per card} on both the client and server, and card redemption requires time linear in the number of punches as well. }

A recent line of work, beginning with the Black Box Accumulation (BBA) of Jager and Rupp~\cite{bba}, addresses this problem \rev{by building schemes which keep track of a customer's ``balance'' of punches \newrev{on a card} in constant space. Unfortunately,} although individual hole punches are unlinkable in the original BBA scheme, the processes of issuing and redeeming a punch card are not. This shortcoming is rectified in a series of follow-up works~\cite{bbaplus, uacs,uacsplus,bbw,p4tc}, all of which additionally extend the idea of black box accumulation to support a broader set of functionalities\rev{, usually via some combination of blind signatures and zero-knowledge proofs that previously signed values satisfy certain relationships, e.g., that a balance was correctly updated.} 

This work introduces new protocols specifically designed to support privacy-preserving digital punch cards. By focusing specifically on the requirements of punch cards and similar points-based loyalty programs, we are able to make both qualitative and quantitative improvements over prior work. Unlike the works listed above, our main protocol does not rely on pairings or range proofs, enabling significant performance improvements. Moreover, by stepping away from previous abstractions used for punch cards, we can handle punch card issuance \emph{non-interactively}, meaning that a customer can generate a new, unpunched card without any interaction with the server. As an ancillary benefit, this removes a potential denial of service opportunity in prior systems, where a customer could register many punch cards without actually needing to earn any punches. 

In terms of performance, our scheme reduces the client side computation required to generate a new punch card by $171\times$ compared to prior work (in addition to not requiring interaction with the server), reduces the total client and server computation times to perform a card punch by $14\times$, and reduces the time to redeem a card by $128\times$. Communication costs to punch and redeem a card are also reduced by $11\times$ and $55\times$, respectively. \newrev{We outperform the blind signature-based scheme sketched above even for punch cards that require a single punch before redemption.}

Our core protocol is quite simple. To generate a punch card, a client picks a random secret and hashes it to a point in an elliptic curve group using a hash function modeled as a random oracle~\cite{FS86,BR93}. To receive a hole punch, the client masks this group element and sends it to the server, who sends it back raised to a server-side secret value, along with a proof that this was done honestly. Finally, after several punches, the client redeems the card by sending the unmasked version along with the initial random secret to the server. The server checks that the group element submitted matches the hash of the random secret raised to the appropriate exponent. It also checks that the punch card being redeemed has not been redeemed before. Since the server is not involved in card issuance and only ever sees separately masked versions of the card, it cannot link a redeemed card to any past transaction. We prove, in the Algebraic Group Model (AGM)~\cite{FKL18}, that a malicious customer cannot successfully claim more rewards than it is entitled to redeem.

We also present a number of extensions to our main scheme that allow us to handle variations on the typical punch card. For example, we can handle special promotions where users get multiple punches, programs where purchases receive a fixed number of points instead of a single punch, and even private ticketing systems. Our most involved extension allows customers to merge the points on two punch cards without revealing anything to the server about the individual punch cards being merged. This extension uses pairings, but it still maintains the other advantages of our protocol and outperforms prior work, albeit by a smaller margin. 

Our schemes are implemented in Rust with an Android wrapper for testing on mobile devices, and all our code and raw performance data are open source at 
\url{https://github.com/SabaEskandarian/PunchCard}.

\section{Design Goals}\label{goals}

The goal of a digital punch card scheme is to allow customers to have a ``punch card'' that resides on their mobile device and is punched, e.g., via NFC, at the time a purchase is made, instead of a physical card that must be handed to a store employee and physically punched with a hole puncher. 

This section \rev{begins by motivating privacy-preserving punch cards and discussing different deployment scenarios. Then} we give security definitions and contrast the goals of our work with those of closely related works. 

\subsection{\rev{Motivation and Deployment Scenarios}}

\rev{Given that digital loyalty programs are so often used to collect data on individual customers, we begin by reviewing the potential benefits of a privacy-preserving punch card scheme.} \rev{Digital punch cards} provide additional security and convenience to both the customer and the store. 

 \rev{First, there is the convenience for customers, who can carry fewer cards when the punch card programs they participate in are digital and can enjoy a smoother checkout experience. Although this indirectly benefits businesses, there are also direct benefits to having a digital punch card program, even if it does not track individual users. Aggregate statistics such as the number of cards punched or redeemed in a given day can be time-consuming and error-prone statistics for employees to record, especially during busy business hours, but a digital app can make keeping this information effortless. Digital solutions also save the time employees would otherwise spend punching cards, speeding up lines. Moreover, many physical punch cards are punched with hole punches that can easily be bought at an office supply store or on specialty stores online, meaning that counterfeiting is trivial for any motivated adversary. Going digital removes this possibility. }

\rev{Our scheme constitutes only part of a privacy-preserving solution to mobile apps for businesses, and the app in which it is deployed must also be designed to protect privacy. This is true of any privacy-preserving cryptographic scheme deployed within a larger product. We envision this deployment could either be a feature in a larger app that a business already provides for its customers, or there could be a stand-alone app that handles the punch card program for multiple businesses. The latter approach has the benefit that auditing the one app can suffice for assuring the privacy of punch cards at several businesses as opposed to auditing each implementation separately. }

We note that it may be the case that a customer's payment method already identifies them, e.g., if they use a credit card instead of cash, but having a private loyalty system is still important despite this. In particular, there are far more regulations on what payment networks like Visa/MasterCard can do with a customer's card information than what an arbitrary loyalty app that collects user data can do\rev{, so there is additional privacy harm in a loyalty app getting more customer information.} Moreover, new components added to the customer experience should be designed in a privacy-first way, so that the loyalty program does not become a barrier to privacy in the transaction process later.

\subsection{Functionality Goals}

A punch card scheme consists of three components. First, a client running on a customer's phone should be able to create a new punch card. Next, the client and a server running a loyalty program can interact in order for the server to give the client a ``hole punch.'' Finally, a client can submit a completed punch card to the server for verification, and the server will accept valid punch cards that have not already been redeemed. The server keeps a database \ms{DB} of previously redeemed cards to make sure a client doesn't redeem the same card multiple times. After verifying a card, the server can give the client some out-of-band reward. In general, each of these steps can be a multi-round interactive protocol between the two parties. However, since all our protocols involve exactly one round, we present the syntax of a punch card scheme below as consisting of individual algorithms instead of interactive protocols. 

A \emph{punch card scheme} defined with respect to a security parameter $\lambda$ is defined as follows. 

\begin{itemize}
\item $\ms{ServerSetup}(1^\lambda)\rightarrow \ms{sk}, \ms{pk}, \ms{DB}$: On input a security parameter $\lambda$, the initial server setup produces server public and secret keys, as well as an empty database to record previously redeemed punch cards. 

\item $\ms{Issue}(1^\lambda)\rightarrow \ms{psk}, p$: On input a security parameter $\lambda$, the \ms{Issue} algorithm generates new punch card $p$ and a punch card secret \ms{psk}. 

\item $\ms{ServerPunch}(\ms{sk}, \ms{pk}, p) \rightarrow p', \pi$: On input the server keys and a punch card $p$, \ms{ServerPunch} outputs an updated punch card $p'$ and a proof $\pi$ that the punch card $p$ was updated correctly. 

\item $\ms{ClientPunch}(\ms{pk}, \ms{psk}, p, p', \pi) \rightarrow \ms{psk}', p'' \text{or} \bot$: Given the public key, a punch card secret \ms{psk}, the accompanying punch card $p$, a server-updated punch card value $p'$, and a proof $\pi$, \ms{ClientPunch} outputs an updated secret $\ms{psk}'$ and card $p''$ if the proof $\pi$ is accepted and $\bot$ otherwise. 

\item $\ms{ClientRedeem}(\ms{psk}, p) \rightarrow \ms{psk}', p'$: Given a punch card secret \ms{psk} and the corresponding punch card $p$, \ms{ClientRedeem} outputs an updated secret $\ms{psk}'$ and card $p'$ that are ready to be sent to the server for redemption. 

\item $\ms{ServerVerify}(\ms{sk}, \ms{pk}, \ms{DB}, \ms{psk}, p, n) \rightarrow 1/0, \ms{DB}'$: on input the server keys, redeemed card database, a punch card, the accompanying secret, and an integer $n\in\Z$ determining the required number of punches for redemption, \ms{ServerVerify} outputs a bit determining whether or not the punch card is accepted and an updated database $\ms{DB}'$. 
\end{itemize}

Correctness for a punch card scheme is defined in a straightforward way. An honestly generated punch card that has received $n$ punches should be accepted by an honest server. This should hold true even after many punch cards have been generated and redeemed. 

\begin{definition}[Correctness]
We say that a punch card scheme is \emph{correct} if for 
\begin{align*}
&\ms{sk}, \ms{pk}, \ms{DB}_0\gets\ms{ServerSetup}(1^\lambda)
\end{align*} and any $n\in\Z$, the following set of operations, repeated sequentially $N=\poly(\lambda)$ times, results in $b_j=1$ for all $j\in[N]$ with all but negligible probability in $\lambda$.
\begin{align*}
&(\ms{psk}_0, p_0)\gets\ms{Issue}(1^\lambda)\\
&\ms{for\ }i\in[n]:\\
&\ms{\ \ \ \ }p_i', \pi_i \gets \ms{ServerPunch}(\ms{sk}, \ms{pk}, p_i)\\
&\ms{\ \ \ \ }\ms{psk}_{i+1}, p_{i+1} \gets \ms{ClientPunch}(\ms{pk}, \ms{psk}_i, p_i, p_i', \pi_i)\\
&\ms{psk}, p \gets \ms{ClientRedeem}(\ms{psk}_n, p_n)\\
&b_j, \ms{DB}_{j+1}\gets\ms{ServerVerify}(\ms{sk}, \ms{pk}, \ms{DB}_j, \ms{psk}, p, n)\\
\end{align*}
\end{definition}

The functionality we desire from our punch cards is at a high level similar to that offered by black box accumulation (BBA)~\cite{bba}. Although we offer a similar functionality, we will do so with stronger security guarantees and significantly improved performance. On the other hand, other follow-up works to BBA~\cite{bbaplus,uacs,uacsplus,bbw} offer additional features that might be useful in other kinds of loyalty programs, such as reducing balances and partially spending accrued rewards. These features enable other applications, but, as described in Section~\ref{intro}, they render the solutions less effective for the original punch card problem. Bobolz~\etal~\cite{uacsplus} introduce the possibility of recovering from a partially completed spend that gets interrupted mid-protocol, e.g., due to a communication or hardware fault. Our scheme avoids the potential for this problem entirely because redemption only requires a single message from the client to the server. 

One way in which our setting differs fundamentally from prior work is the way in which we prevent a punch card from being redeemed more than once. In our setting, the server has access to a database of all previously redeemed cards when deciding whether or not to accept a new punch card submitted for verification. Prior works consider an \emph{offline double spending} scenario where the server may not have access to such a database but must be able to identify clients who have double spent punch cards after the fact. We do not pursue this goal for three reasons, listed in order of increasing importance below.

\begin{enumerate}
\item Not necessary: point-of-sale terminals often require an internet connection to work, so synchronizing spent punch cards between different locations of a firm with multiple branches can happen online with less performance cost than an offline verification approach.

\item Prohibitively expensive: the performance cost of checking whether a punch card was double spent in prior work is often prohibitive, requiring an exponentiation for each previously redeemed punch card. This would be about 8 orders of magnitude slower than the hash table lookup required in our setting (as measured on our evaluation setup). 

\item Requires real-world identity: identifying the human user who double spent a punch card in a way that the person can be penalized requires some notion of real-world identity tied to the punch card client. This means that any loyalty system providing such a feature would require a user's real-world identity in order to operate. This violates our original goal of making a punch card loyalty program digital with no damage to user privacy. 
\end{enumerate}

\subsection{Security Goals}

\begin{figure*}\centering
\fbox{
\begin{minipage}{.45\linewidth}
\begin{enumerate}
\item [] \underline{$\ms{REALPRIV}(\lambda, \calA)$:}\vspace{.5em}
\item [1.] $T\gets\{\}$
\item [2.] $c\gets 0$
\item [3.] $(\ms{sk}, \ms{pk})\gets \calA_1(\lambda)$
\item [4.] $b\gets \calA_2^{\calO_{\ms{issue}}, \calO_{\ms{punch}}, \calO_{\ms{redeem}}}(\lambda)$
\item [5.] Output $b$
\end{enumerate}

The experiment $\ms{REALPRIV}(\lambda, \calA)$ makes use of the following oracles, which all have access to the shared state $T$ keeping track of issued punch cards and the public key \ms{pk}, subject to the restriction that $\calO_\ms{redeem}$ is only called once on each input~$\ms{id}$. 

\begin{itemize}
\item [] \underline{$\calO_\ms{issue}(\lambda)$:}\vspace{.5em}
\item [1.] $\ms{psk}, p\gets\ms{Issue}(1^\lambda)$
\item [2.] $T[c]\gets(\ms{psk}, p)$
\item [3.] $c\gets c+1$
\item [4.] Output $c, p$
\end{itemize}

\begin{itemize}
\item [] \underline{$\calO_\ms{punch}(\ms{id}, p', \pi)$:}\vspace{.5em}
\item [1.] if $\ms{id}\notin T$, output $\bot$
\item [2.] $(\ms{psk}, p)\gets T[\ms{id}]$
\item [3.] $(\ms{psk}', p'')\gets\ms{ClientPunch}(\ms{pk}, \ms{psk}, p, p', \pi)$
\item [4.] if $(\ms{psk}', p'')\neq\bot$, then $T[\ms{id}] \gets (\ms{psk}', p'')$
\item [5.] Output $p''$
\end{itemize}

\begin{itemize}
\item [] \underline{$\calO_\ms{redeem}(\ms{id})$:}\vspace{.5em}
\item [1.] if $\ms{id}\notin T$, output $\bot$
\item [2.] $(\ms{psk}, p)\gets T[\ms{id}]$
\item [3.] $\ms{psk}', p'\gets\ms{ClientRedeem}(\ms{psk}, p)$
\item [4.] Output $\ms{psk}', p'$
\end{itemize}
\end{minipage}
}
\hspace{.5em}
\fbox{
\begin{minipage}{.45\linewidth}
\begin{enumerate}
\item [] \underline{$\ms{IDEALPRIV}(\lambda, \calA, \calS)$:}\vspace{.5em}
\item [1.] $T\gets\{\}$
\item [2.] $c\gets 0$
\item [3.] $(\ms{sk}, \ms{pk})\gets \calA_1(\lambda)$
\item [4.] $b\gets \calA_2^{\calO_{\ms{issue}}, \calO_{\ms{punch}}, \calO_{\ms{redeem}}}(\lambda)$
\item [5.] Output $b$
\end{enumerate}

The experiment $\ms{IDEALPRIV}(\lambda, \calA, \calS)$ makes use of the following oracles, which all have access to the shared state $T$ keeping track of issued punch cards and the public key \ms{pk}, subject to the restriction that $\calO_\ms{redeem}$ is only called once on each input~$\ms{id}$.

\begin{itemize}
\item [] \underline{$\calO_\ms{issue}(\lambda)$:}\vspace{.5em}
\item [1.] $\ms{psk}, p\gets\ms{Issue}(1^\lambda)$
\item [2.] $T[c]\gets(0, \ms{psk}, p)$
\item [3.] $c\gets c+1$
\item [4.] Output $c, p$
\end{itemize}

\begin{itemize}
\item [] \underline{$\calO_\ms{punch}(\ms{id}, p', \pi)$:}\vspace{.5em}
\item [1.] if $\ms{id}\notin T$, output $\bot$
\item [2.] $(c_\ms{id}, \ms{psk}, p)\gets T[\ms{id}]$
\item [3.] $(\ms{psk}', p'')\gets\ms{ClientPunch}(\ms{pk}, \ms{psk}, p, p', \pi)$
\item [4.] if $(\ms{psk}', p'')\neq\bot$, then $T[\ms{id}] \gets (c_\ms{id}+1, \ms{psk}', p'')$
\item [5.] Output $p''$
\end{itemize}

\begin{itemize}
\item [] \underline{$\calO_\ms{redeem}(\ms{id})$:}\vspace{.5em}
\item [1.] if $\ms{id}\notin T$, output $\bot$
\item [2.] $(c_\ms{id}, \ms{psk}, p)\gets T[\ms{id}]$
\item [3.] $\ms{psk}', p'\gets\calS(\ms{sk}, c_\ms{id})$
\item [4.] Output $\ms{psk}', p'$
\end{itemize}
\end{minipage}
}
\vspace{.5em}
\caption{\centering Real and ideal privacy experiments}\label{privexps}

\end{figure*}

\begin{figure}
\fbox{\begin{minipage}{.95\linewidth}
\begin{enumerate}
\item [] \underline{$\ms{SOUND}(\lambda, \calA)$:}\vspace{.5em}
\item [1.] $\ms{sk}, \ms{pk}, \ms{DB}\gets\ms{ServerSetup}(1^\lambda)$
\item [2.] $T \gets\{\}$
\item [3.] $c\gets0$
\item [4.] $c_\ms{punch}\gets 0$
\item [5.] $c_\ms{redeem}\gets 0$

\item [6.] $\calA^{\calO_{\ms{punch}}, \calO_{\ms{redeem}}, \calO_{\ms{honIssue}}, \calO_{\ms{honPunch}}, \calO_{\ms{honRedeem}}, \calO_{\ms{corrupt}}}(\lambda, \ms{pk})$
\item [7.] if $c_\ms{redeem} > c_\ms{punch}$, output 1. Otherwise, output 0. 
\end{enumerate}
\vspace{.5em}

The experiment $\ms{SOUND}(\lambda, \calA)$ makes use of the following oracles, which all have access to the shared state $T, c, c_\ms{punch}, c_\ms{redeem},\ms{sk},\ms{pk}, \ms{DB}$. \vspace{.5em}

\begin{itemize}
\item [] \underline{$\calO_\ms{punch}(p)$:}\vspace{.5em}
\item [1.] $p', \pi \gets \ms{ServerPunch}(\ms{sk}, \ms{pk}, p)$
\item [2.] $c_\ms{punch} \gets c_\ms{punch}+1$
\item [3.] Output $(p', \pi)$
\end{itemize}
\begin{itemize}
\item [] \underline{$\calO_\ms{redeem}(\ms{psk}, p, n)$:}\vspace{.5em}
\item [1.] $b, \ms{DB}'\gets\ms{ServerVerify}(\ms{sk}, \ms{pk}, \ms{DB}, \ms{psk}, p, n)$
\item [2.] if $b=1$:
\item [3.] \hspace{1em} $c_\ms{redeem}\gets c_\ms{redeem}+n$
\item [4.] \hspace{1em} $\ms{DB}\gets\ms{DB}'$
\item [5.] Output $b$
\end{itemize}
\begin{itemize}
\item [] \underline{$\calO_\ms{honIssue}()$:}\vspace{.5em}
\item [1.] $\ms{psk}, p\gets\ms{Issue}(1^\lambda)$
\item [2.] $T[c]\gets(0, \ms{psk}, p)$
\item [3.] $c\gets c+1$
\item [4.] Output $p$
\end{itemize}
\begin{itemize}
\item [] \underline{$\calO_\ms{honPunch}(\ms{id})$:}\vspace{.5em}
\item [1.] if $\ms{id}\notin T$, output $\bot$
\item [2.] $(c_\ms{id}, \ms{psk}, p)\gets T[\ms{id}]$
\item [3.] $p', \pi \gets \ms{ServerPunch}(\ms{sk}, \ms{pk}, p)$
\item [4.] $(\ms{psk}', p'')\gets\ms{ClientPunch}(\ms{pk}, \ms{psk}, p, p', \pi)$
\item [5.] if $(\ms{psk}', p'')\neq\bot$, then $T[\ms{id}] \gets (c_\ms{id}+1, \ms{psk}', p'')$
\item [6.] Output $p', \pi, p''$
\end{itemize}
\begin{itemize}
\item [] \underline{$\calO_\ms{honRedeem}(\ms{id})$:}\vspace{.5em}
\item [1.] if $\ms{id}\notin T$, output $\bot$
\item [2.] $(c_\ms{id}, \ms{psk}, p)\gets T[\ms{id}]$
\item [3.] $\ms{psk}', p'\gets\ms{ClientRedeem}(\ms{psk}, p)$
\item [4.] $b, \ms{DB}'\gets\ms{ServerVerify}(\ms{sk}, \ms{pk}, \ms{DB}, \ms{psk}', p', n)$
\item [5.] if $b=1$, then $\ms{DB}\gets\ms{DB}'$
\item [6.] Output $\ms{psk}', p', b$
\end{itemize}
\begin{itemize}
\item [] \underline{$\calO_\ms{corrupt}(\ms{id})$:}\vspace{.5em}
\item [1.] if $\ms{id}\notin T$, output $\bot$
\item [2.] $(c_\ms{id}, \ms{psk}, p)\gets T[\ms{id}]$
\item [3.] delete $\ms{id}$ from $T$
\item [4.] $c_\ms{punch}\gets c_\ms{punch}+c_\ms{id}$
\item [5.] Output $(c_\ms{id}, \ms{psk}, p)$
\end{itemize}

\end{minipage}}
\vspace{.5em}
\caption{\centering Soundness experiment}\label{expsound}
\end{figure}

At a high level, a punch card scheme must provide two kinds of security guarantees. First, it must protect client privacy such that the server cannot link messages sent by the same client. Second, it must be sound in that no client can redeem more rewards than it has honestly accrued through valid hole punches authorized by the server. 

We formally define privacy using a simulation-based definition. This means that in order for privacy to be satisfied, there must exist a \emph{simulator} algorithm that can generate the view of the punch card server without access to client-side secrets. Informally, if the server can't distinguish between the output of the simulator and a real client, then it surely can't learn anything from interacting with a real client because it could have received the same information by running the simulator on its own. Thus nothing is leaked that links one interaction with the server with other interactions. \rev{For simplicity, we do not prove security in the UC framework~\cite{UC,UCJournal}, but we do not anticipate any significant obstacles to composability in our protocols. }

Defining privacy for hole punches is rather straightforward. We simply require that the punch cards output by successful calls to $\ms{Issue}$ and $\ms{ClientPunch}$ can always be simulated regardless of the functions' inputs. We need not consider cases where the output is $\bot$ because in this case the client in a deployment of the punch card scheme would not return to the server with the updated punch card, so there is no possibility of later interactions being linked. 

Handling redemption is a little more involved because the view of the server during redemption may depend on the number of earlier punches on the same card. For this reason, our privacy definition for redemption defines real and ideal privacy experiments, both of which begin with the challenger initializing an empty table $T$ mapping unique integer identifiers to punch cards and a counter $c\gets 0$ that is incremented each time a new punch card is issued. The adversary is allowed to pick server secret and public keys $(\ms{sk}, \ms{pk})$, and then it is allowed to interact with oracles $\calO_\ms{issue}$, $\calO_\ms{punch}$, and $\calO_\ms{redeem}$ which play the role of the client in the punch card scheme. In the real privacy experiment, these oracles act as wrappers around the \ms{Issue}, \ms{ClientPunch}, and \ms{ClientRedeem} functions, simply calling the functions on the requested punch card (identified by an \ms{id} number chosen at issuance) and performing bookkeeping when punch cards are issued, updated, or redeemed. The ideal privacy experiment has identical $\calO_\ms{issue}$ and $\calO_\ms{punch}$ oracles, except it additionally keeps track of how many punches each card has received. The ideal $\calO_\ms{redeem}$ oracle, instead of calling $\ms{ClientRedeem}$, calls the simulator algorithm $\calS$, which is given no information about the card being redeemed except the number of punches it has previously received. At the end of each experiment, the adversary outputs a distinguishing bit~$b$. This definition implies that no adversary can distinguish between redeemed punch cards that have the same number of punches on them. Taken together with the simulators for punches, this implies that the server can simulate all interactions it has with clients.

\begin{definition}[Privacy]\label{def:privacy}
Let $\Pi$ be a punch card scheme. We say that $\Pi$ has \emph{privacy} if for security parameter $\lambda$, and for every adversary $\calA=(\calA_\ms{punch}, \calA_\ms{redeem})$, with $\calA_\ms{redeem}$ further subdivided into $\calA_1, \calA_2$, there exists a negliglible function $\ms{negl}(\cdot)$ and simulator algorithms $\calS_\ms{punch}$ and $\calS_\ms{redeem}$ such that the following conditions hold.
\begin{itemize}
\item For any punch card $p\neq\bot$ output by $\ms{Issue}(1^\lambda)$,
\begin{multline*}
\Big|\ms{Pr}\big[\calA_\ms{punch}(p)\big] - \ms{Pr}\big[\calA_\ms{punch}(\calS_\ms{punch}(1^\lambda))\big] \Big| < \ms{negl}(\lambda).
\end{multline*}

\item For any punch card $p''\neq\bot$ output by $\ms{ClientPunch}(\ms{pk}, \allowbreak \ms{psk}, \allowbreak p, \allowbreak p', \pi)$ for any choice of $(\ms{pk}, \ms{psk}, p, p', \pi)$,
\begin{multline*}
\Big|\ms{Pr}\big[\calA_\ms{punch}(p'')\big] - \ms{Pr}\big[\calA_\ms{punch}(\calS_\ms{punch}(1^\lambda))\big] \Big| < \ms{negl}(\lambda).
\end{multline*}

\item It holds that
\begin{multline*}
\Big|\ms{Pr\big[REALPRIV}(\lambda, \calA_\ms{redeem})=1\ms{\big]} \\- \ms{Pr\big[IDEALPRIV}(\lambda, \calA_\ms{redeem}, \calS_\ms{redeem})=1\ms{\big]}\Big|
<\ms{negl}(\lambda),
\end{multline*}
where the experiments \ms{REALPRIV} and \ms{IDEALPRIV} are defined as in Figure~\ref{privexps}. 
\end{itemize}
\end{definition}

The starting point for our soundness definition resembles that of BBA~\cite{bba}, which requires that a malicious client can only redeem as many punches as it has accrued. Aside from modifying the syntax of the definition to match our own, we have also modified it to allow the adversary to interleave hole punches and redemptions instead of requiring that all redemptions occur at the end of the protocol. More importantly, our definition also requires that an adversary cannot steal punches from honest users. BBA only requires that the total number of punches redeemed does not exceed the total number of punches given, which does not rule out the possibility of a malicious user stealing punches from honest users. 
\rev{Our soundness definition also models \emph{adaptive} corruptions, where the adversary can corrupt a previously honest user at any time during the security experiment. }
The definition gives the adversary the ability to get hole punches and redeem cards, but it also allows the adversary to ask for honest users to generate, punch, and redeem cards, with the adversary eavesdropping on all communications and having the ability to corrupt honest users at any point, acquiring their punches and secrets.

\begin{definition}[Soundness]\label{def:soundness}
Let $\Pi$ be a punch card scheme. Then for a security parameter $\lambda$ and adversary $\calA$, we define the soundness experiment $\ms{SOUND}(\lambda, \calA)$ in Figure~\ref{expsound}.
We say that a punch card scheme $\Pi$ satisfies \emph{soundness} if there exists a negligible function $\ms{negl}(\cdot)$ such that for any efficient adversary $\calA$, we have
$$\ms{Pr[SOUND}(\lambda, \calA)=1\ms{]}<\ms{negl}(\lambda).$$
\end{definition}

As in BBA, this definition does not capture whether or not a client can transfer value from one punch card to another or merge separate, partially filled punch cards to redeem a single, larger card. In fact, it is not entirely clear if this kind of card merging is a malicious behavior to be avoided or a beneficial feature to be desired. This kind of merging appears to be difficult to do in our main construction, but we show how to extend our scheme to allow a limited degree of merging in Section~\ref{merge}.

\section{Privacy-Preserving Punch Cards}\label{punchcard}

This section describes our main punch card scheme. In addition to its quantitative improvements over prior work, which we measure in Section~\ref{eval}, our scheme has a number of other desirable properties: 

\begin{itemize}
\item Whereas most prior works make use of pairings, either because they rely on Groth-Sahai proofs~\cite{GS08} or Pointcheval-Saunders signatures~\cite{PS16}, our punch card scheme does not require pairings.

\item We require no communication at all to issue a new punch card -- a client can do this on its own without server involvement. This removes a potential denial of service opportunity present in prior work, where a client could initiate a number of punch cards without making any purchases, thereby making the server incur unnecessary storage and computation at no cost to the malicious client. 

\item Our redemption process involves a client sending a single message to the server, so there is no potential for the process to be interrupted mid-protocol and no need for a recovery process of the form proposed by Bobolz~\etal~\cite{uacsplus}. 
\end{itemize}

\subsection{Main Construction}
\paragraph{A basic scheme}. We will begin with a bare-bones version of our scheme that provides neither privacy nor soundness. From this starting point, we will gradually build up to our actual scheme. Throughout, we will work in a group $G$ of prime order $q$. 

To set up the initial scheme, the server chooses a secret $\ms{sk}\in\Z_q$, and a client chooses a group element $p_0\getsr G$ to represent the punch card. To receive a hole punch, the client sends $p_i$ to the server, who returns $p_{i+1}\gets p_i^\ms{sk}$. To redeem a card after $n$ punches, the client submits $p_0$ and $p_{n}$ to the server, who accepts if $p_{n}=p_0^{\ms{sk}^n}$ and $p_0$ has not been previously used in a redeemed card. \rev{A nice feature of this approach, which we will keep in our final construction, is that punching a card requires no modification of server-side state.}

\paragraph{Adding privacy}. The scheme above clearly provides no privacy because the server can link the different times it sees a punch card. We can make punches made on the same card unlinkable by only sending the server \emph{masked} versions of the punch card, in a way reminiscent of standard oblivious PRF constructions~\cite{NR04,FIPR05}. The punch card is always masked with a fresh value $m\getsr \Z_q$ before being sent to the server, so the server only sees $p'\gets p^m$, not $p$ itself. The mask $m$ is removed (via exponentiation by $1/m$) before the next mask is applied. This means that the server sees a different random group element each time it punches a card. Moreover, an honest server only sees a random group element $p\in G$ and $p^{\ms{sk}^n}$ at redemption time. 

Unfortunately, this does not actually suffice to provide privacy against an actively malicious server. Consider a malicious server who always follows the scheme above, but during one hole punch (for a client it later wishes to re-identify) it uses a different secret $\ms{sk}'\getsr\Z_q$ so that $\ms{sk}\neq\ms{sk}'$ except with negligible probability. Then when an unsuspecting client attempts to redeem its punch card, instead of submitting $p, p^{\ms{sk}^n}$, it really submits $p, p^{\ms{sk}^{n-1}\ms{sk}'}$, allowing the server to identify it. 

We can handle the attack above by having the server give a zero knowledge proof of knowledge that it has honestly punched a card. To facilitate this, we require the server setup to also output a public key $\ms{pk}\gets g^\ms{sk}$, for some publicly known generator $g\in G$. Then the server can prove at punching time that it is returning a punch card $p'$ such that $p'=p^\ms{sk}$, i.e., that $p, \ms{pk}, p'$ form a DDH tuple~\cite{DH76}. This can be proven efficiently with a generic Chaum-Pedersen proof~\cite{CP92} made non-interactive in the random oracle model~\cite{FS86,BR93}. The server generates the proof $\pi$ and sends it to the client along with the punched card $p_{i+1}$. The client rejects the updated card if the proof does not verify. We denote proofs using the notation of Camenisch and Stadler~\cite{CS97}, where $ZKPK\{(\ms{sk}), \ms{pk}=g^\ms{sk}, p'=p^\ms{sk}\}$ represents the Chaum-Pedersen proof, and require the standard zero knowledge and existential soundness properties~\cite{cryptobook}. 

\paragraph{Adding soundness}. The two modifications above ensure that the scheme provides privacy, and they even protect against theft of honest clients' secrets by eavesdropping malicious clients because all the messages sent by honest clients to the server are uniformly random until the point where the punch card is redeemed (and then it's too late for the card to be stolen). Unfortunately, the scheme still fails to provide soundness, as a malicious client can redeem more points than it has received punches. Consider a client who at first honestly follows the protocol and redeems a punch card by submitting $p_0, p_n$. Next, it submits a masked $p_n$ for another punch and gets back $p_{n+1}$. Finally, it submits $p_1,p_{n+1}$ as another valid punch card. According to the scheme described thus far, the server would accept this punch card redemption, meaning that the malicious client can redeem $2n$ punches even though it only received $n+1$ punches. 

The attack above works because the client can choose any group element it wants as $p_0$. We modify our scheme to provide soundness by forcing clients to generate $p_0$ as the output of a hash function modeled as a random oracle $H:\zo^\lambda\to G$. In particular, instead of choosing a random $p_0$, the client chooses a random $u\gets\zo^\lambda$ and sets $p_0\gets H(u)$. When redeeming a punch card, instead of sending $p_0, p_n$, the client sends $u, p_n$, and the server checks that $p_n = H(u)^{\ms{sk}^n}$. Since the hash function is modeled as a random function, a malicious client cannot find the preimage of a group element under $H$, eliminating the attack. 

With this defense, our scheme now provides both privacy and soundness. 
We formalize our construction as follows.

\begin{construction}[Punch Card Scheme]
Let $G$ be a group of prime order $q$ with generator $g\in G$, and let $H$ be a hash function $H:\zostar\to G$, modeled as a random oracle.

We construct our punch card scheme as follows:

\begin{itemize}
\item $\ms{ServerSetup}(1^\lambda)\rightarrow \ms{sk}, \ms{pk}, \ms{DB}$: Select random $\ms{sk}\getsr \Z_q$ and set $\ms{pk}\gets g^\ms{sk}\in G$. Initialize \ms{DB} as an empty hash table, and return \ms{sk}, \ms{pk}, and \ms{DB}. 

\item $\ms{Issue}(1^\lambda)\rightarrow \ms{psk}, p$: First, select a random secret $u\getsr\zo^\lambda$ and a random masking value $m\getsr \Z_q$.  Then compute $p\gets H(u)^m\in G$. Let $\ms{psk}\gets(u,m)$. Return $\ms{psk}, p$.

\item $\ms{ServerPunch}(\ms{sk}, \ms{pk}, p) \rightarrow p', \pi$: Compute $p'\gets p^{\ms{sk}}$ as well as the proof of knowledge $\pi\gets ZKPK\{(\ms{sk}), \ms{pk}=g^\ms{sk}, p'=p^\ms{sk}\}$. Output $p',\pi$. 

\item $\ms{ClientPunch}(\ms{pk}, \ms{psk}, p, p', \pi) \rightarrow \ms{psk}', p'' \text{ or } \bot$: First, verify the proof $\pi$. If verification fails, output $\bot$. Otherwise, begin by interpreting $\ms{psk}$ as $(u,m)$. Then sample a new random masking value $m'\getsr\Z_q$ and compute $p''\gets (p')^{m'/m}$. Set $\ms{psk}'\gets(u,m')$, and output $\ms{psk}', p''$. 

\item $\ms{ClientRedeem}(\ms{psk}, p) \rightarrow \ms{psk}', p'$: Begin by interpreting $\ms{psk}$ as $(u, m)$ with $u\in\zo^\lambda$ and $m\in Z_q$. Then compute $p'\gets p^{1/m}\in G$. Return $u$ (as $\ms{psk}'$) and $p'$. 

\item $\ms{ServerVerify}(\ms{sk}, \ms{pk}, \ms{DB}, \ms{psk}, p, n) \rightarrow 1/0, \ms{DB}'$: Check whether $p=H(\ms{psk})^{\ms{sk}^n}$ and whether $\ms{psk}\in\ms{DB}$. If the first check returns true and the second returns false, insert $\ms{psk}$ into \ms{DB} and return $1, \ms{DB}$. Otherwise, return $0, \ms{DB}$.  
\end{itemize}
\end{construction}

Observe that the asymptotic complexity of almost every operation in our punch card scheme depends only on the security parameter $\lambda$, with two exceptions. The first exception is that operations on $\ms{DB}$ have amortized time complexity $O(\lambda)$, but in the worst case a read/write to $\ms{DB}$ could depend on the number of previously redeemed punch cards. The other exception is the exponentiation $\ms{sk}^n$ performed in \ms{ServerVerify}, where $O(\log n)$ group operations are required. However, since the same $n$ is often used for every punch card in practice, the server could precompute $\ms{sk}^n$ to remove the logarithmic dependence on $n$. 

\subsection{Security}

We now discuss the security of our constructions. We begin by proving the privacy of our punch card scheme. 

\begin{theorem}\label{thm:privacy}
Assuming the existential soundness of the Chaum-Pedersen proof system, our punch card scheme has privacy (Definition~\ref{def:privacy}) in the random oracle model. 
\end{theorem}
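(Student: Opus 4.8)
The plan is to construct the two simulators $\calS_\ms{punch}$ and $\calS_\ms{redeem}$ explicitly and argue indistinguishability. For the punch simulators, the key observation is that every punch card value that an honest client ever hands to the server---whether fresh from $\ms{Issue}$ or updated by $\ms{ClientPunch}$---is of the form $H(u)^{m}$ (respectively $(p')^{m'/m}$) for a freshly and independently sampled masking exponent. Since $m$ (resp.\ $m'$) is uniform in $\Z_q$ and $H(u)$ is a fixed nonidentity group element with overwhelming probability, the distribution of the transmitted value is uniform over $G \setminus \{1\}$ (or over all of $G$, up to the negligible event that the base is the identity), independent of $u$, $p$, $p'$, $\pi$, and the server's keys. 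So I set $\calS_\ms{punch}(1^\lambda)$ to simply output a uniformly random group element. The first two bullets of Definition~\ref{def:privacy} then follow by a statistical argument: the real and simulated distributions are identical conditioned on the base being a generator, and differ with probability at most $1/q = \negl(\lambda)$.

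For redemption, I first describe the ideal simulator. On input $(\ms{sk}, c_\ms{id})$, the simulator $\calS_\ms{redeem}$ picks a fresh random $u \getsr \zo^\lambda$, computes $p' \gets H(u)^{\ms{sk}^{c_\ms{id}}}$, and outputs $(u, p')$. This is precisely what an honest client would submit after $c_\ms{id}$ honest punches: tracing through the scheme, after $c_\ms{id}$ calls to $\ms{ClientPunch}$ the stored card is $H(u)^{\ms{sk}^{c_\ms{id}} \cdot m}$ for the current mask $m$, and $\ms{ClientRedeem}$ strips the mask to yield exactly $(u, H(u)^{\ms{sk}^{c_\ms{id}}})$. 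The only subtlety is that in the real experiment the punches might have been applied by a malicious server using proofs $\pi$; here is where existential soundness of the Chaum--Pedersen proof system enters. Whenever $\ms{ClientPunch}$ accepts a proof $\pi$ on $(p, p', \ms{pk})$, existential soundness guarantees that $p' = p^{s}$ where $s$ is the discrete log of $\ms{pk}$ base $g$---a \emph{fixed} value determined by $\ms{pk}$, which the adversary chose once at the start. Hence every accepted punch multiplies the exponent by the same $s$, so after $c_\ms{id}$ accepted punches the (unmasked) card is $H(u)^{s^{c_\ms{id}}}$, and $\ms{sk}$ in the simulator's input should be read as this $s$. I would make this bridge via a hybrid that replaces, one redemption oracle call at a time, the real $\ms{ClientRedeem}$ output with the simulated one; the hybrids are identically distributed except on the event that some accepted proof was for a non-DDH tuple, which occurs with probability bounded by the existential soundness advantage times the number of punch queries.

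The main obstacle, and the place I would spend the most care, is handling the \emph{malicious-server} aspect cleanly: the server picks $(\ms{sk}, \ms{pk})$ adversarially and need not have $\ms{pk} = g^\ms{sk}$ for the $\ms{sk}$ it actually uses during punches, and indeed need not use a consistent exponent across punches at all. Existential soundness of the proof system is exactly the tool that forces consistency---it pins the effective per-punch exponent to $\log_g \ms{pk}$---but I need to phrase the reduction so that a distinguishing adversary yields a proof for a false statement relative to a \emph{fixed} statement $(g, \ms{pk})$, and argue the reduction can detect this false-statement event (it knows $u$, $m$, $p$, $p'$, so it can check whether $p' = p^{\log_g \ms{pk}}$; if it cannot compute discrete logs, it instead checks the DDH relation by pairing-free means, i.e., by having extracted $\ms{sk}$ is not available---so I would instead argue information-theoretically that the bad event ``$\ms{ClientPunch}$ accepts but $(g,p,\ms{pk},p')$ is not a DDH tuple'' has negligible probability by existential soundness, without the reduction needing to recognize it efficiently). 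A secondary minor point is the random-oracle bookkeeping: I must ensure $H(u)$ is a generator except with probability $1/q$ and that freshly sampled $u$'s in the simulator collide with prior queries only negligibly often; both are standard union bounds over the at most $\poly(\lambda)$ queries.
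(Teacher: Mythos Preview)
Your proposal is correct and takes essentially the same approach as the paper: identical simulators ($\calS_\ms{punch}$ outputs a uniform group element; $\calS_\ms{redeem}$ samples a fresh $u$ and returns $(u, H(u)^{\ms{sk}^{c_\ms{id}}})$), with Chaum--Pedersen existential soundness used to force every accepted punch to apply the single exponent $\log_g \ms{pk}$. The paper packages the redemption argument as a global abort-condition hybrid (abort whenever an accepted $\pi$ is for a non-DDH tuple, then switch all redemptions at once, then remove the abort) rather than your per-redemption hybrid, and it glosses over the identity-element and bad-event-detection points you raise, but these are cosmetic differences.
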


\begin{proof}
We begin by describing the simulator $\calS = (\calS_\ms{punch}, \calS_\ms{redeem})$.

\begin{itemize}
\item $\calS_\ms{punch}(1^\lambda)\rightarrow p$: This simulator samples and outputs a random group element $p\getsr G$.

\item $\calS_\ms{redeem}(\ms{sk}, c_\ms{id})\rightarrow \ms{psk}', p'$: This simulator samples a random string $\ms{psk}'\getsr\zo^\lambda$ and computes $p'\gets H(\ms{psk}')^{\ms{sk}^{c_\ms{id}}}$. It outputs $\ms{psk}', p'$. 
\end{itemize}

The first two conditions of privacy are clearly met by $\calS_\ms{punch}$ because the punch card output by a successful call to $\ms{Issue}$ or $\ms{ClientPunch}$ is always a group element raised to a randomly chosen mask $m$ (or $m'$), which will be distributed identically to a randomly chosen group element $p\getsr G$. 

Next, we show through a short series of hybrids that $\ms{REALPRIV}(\lambda, \calA) \allowbreak\approx_c\allowbreak\ms{IDEALPRIV}(\lambda,\calA, \calS_\ms{redeem})$ for our punch card scheme. 

\begin{itemize}
\item[\textbf{H0}:] This hybrid is the real privacy experiment $\ms{REALPRIV}(\lambda, \calA)$.
\item[\textbf{H1}:] In this hybrid, we add an abort condition to the execution of the experiment. The experiment aborts and outputs 0 if $\ms{ClientPunch}$ outputs $p''\neq\bot$ (i.e., it accepts the proof $\pi$) but it is not the case that $\ms{pk}=g^\ms{sk}\wedge p'=p^\ms{sk}$. 

This hybrid is indistinguishable from $H0$ by the soundness of the Chaum-Pedersen proof system. In particular, an adversary $\calA$ who can distinguish between $H0$ and $H1$ can be used by an algorithm $\calB$ to break the soundness of the proof system as follows. $\calB$ plays the role of the adversary in the soundness game for the Chaum-Pedersen proof, and plays the role of the challenger to $\calA$ in either $H0$ or $H1$ with probability $1/2$ each. Whenever $\calA$ causes experiment $H1$ to abort due to the check introduced in this hybrid, $\calB$ submits the proof $\pi$ and the statement $\ms{pk}=g^\ms{sk}\wedge p'=p^\ms{sk}$ to the soundness challenger. Otherwise, $\calB$ outputs $\bot$. 

The algorithm $\calB$ described above breaks the soundness of the Chaum-Pedersen proof with the same advantage that $\calA$ distinguishes between $H0$ and $H1$. To see why, observe that the only difference in the view of $\calA$ between $H0$ and $H1$ occurs when $H1$ aborts. Thus $\calA$ must cause the experiment to abort with probability at least equal to its distinguishing advantage between $H0$ and $H1$. But whenever $H1$ aborts, $\calB$ has a statement and proof that violate the soundness of the Chaum-Pedersen proof, so it wins the soundness game with the same advantage. This argument must be repeated for each proof given to the adversary. 

\item[\textbf{H2}:] In this hybrid, the challenger switches to record-keeping in the table $T$ in the way \ms{IDEALPRIV} does and replaces calls to \ms{ClientRedeem} with calls to $\calS_\ms{redeem}$. 

This hybrid is indistinguishable from $H1$ because the distribution of the adversary $\calA$'s view is identical in the two hybrids. The outputs of $\calO_\ms{issue}$ and $\calO_\ms{punch}$ are unchanged by the bookkeeping change between the two hybrids, so we need only consider $\calO_\ms{redeem}$.

\begin{itemize}
\item $\calO_\ms{redeem}(\ms{id})\rightarrow \ms{psk}', p'$: In $H1$, this oracle returns the secret $u$ used to generate the punch card stored at $T[\ms{id}]$ as well as the value of that punch card $p$ after removing the last mask $m$ to get $p'\gets p^{1/m}$. The value of $u$ is distributed uniformly at random in $\zo^\lambda$. The value of $p'$ is equal to $H(u)$ raised to the server secret $\ms{sk}$ as many times as there was a successful call to $\calO_\ms{punch}(\ms{id}, \cdot, \cdot)$ -- that is, a call whose output was not $\bot$. This is the case because in each such call, the punch card value stored in $T$ is raised to $\ms{sk}$ and its mask is replaced with a new one. The final unmasking operation $p'\gets p^{1/m}$ results in a punch card value $p'=H(u)^{\ms{sk}^n}$, where $n$ is the number of successful calls to $\calO_\ms{punch}(\ms{id}, \cdot, \cdot)$. 

In $H2$, $u$ clearly has the same distribution as in $H1$ because in $\calS_\ms{redeem}$ it is sampled directly as $u\getsr \zo^\lambda$. The value $p'$ also has the same distribution as in $H1$ because the table $T$ keeps count of the number $c_\ms{id}$ of successful calls to $\calO_\ms{punch}(\ms{id}, \cdot, \cdot)$, so $\calS_\ms{redeem}$ can compute $p'\getsr H(u)^{\ms{sk}^{c_\ms{id}}}$ directly. 
\end{itemize}

\item[\textbf{H3}:] This hybrid is identical to $H2$ except the abort condition introduced in $H1$ is removed. As was the case in $H1$, this hybrid is indistinguishable from the preceding hybrid by the soundness of the Chaum-Pedersen proof system. It also corresponds to the ideal privacy game $\ms{IDEALPRIV}(\lambda, \calA, \calS)$, completing the proof. 
\end{itemize}
\end{proof}

Having proven privacy, we now turn to soundness. We prove the soundness of our scheme in the algebraic group model (AGM)~\cite{FKL18}, where for every group element the adversary produces, it must also give a representation of that group element in terms of elements it has already seen. This is a strictly weaker model (in the sense that it puts fewer restrictions on the adversary) than the widely-used generic group model~\cite{Shoup97}, in which some of the prior works on privacy-preserving loyalty programs have been proven secure~\cite{uacs,uacsplus}. 
Our proof relies on the $q$-discrete log assumption, which assumes the computational hardness of winning the following game.

\begin{definition}[$q$-discrete log game]
The $q$-discrete log game for a group $G$ of prime order $p$ is played between a challenger $\calC$ and an adversary $\calA$. The challenger $\calC$ samples $x\getsr \Z_p$ and sends $g^x, g^{x^2},..., g^{x^q}$ to $\calA$. The adversary $\calA$ responds with a value $z\in\Z_p$, and the challenger outputs 1 iff $z=x$. 
\end{definition}

Depending on the concrete group in which the assumption is made, the $q$-discrete log game could be vulnerable to Brown-Gallant-Cheon attacks~\cite{BG04,Cheon06}, which reduce the security of the assumption by a factor of $\sqrt{q}$. Fortunately this attack only negligibly affects the security of the scheme, as $q$ is at most a polynomial in the security parameter $\lambda$. 

We now state and prove our soundness theorem. 
\begin{theorem}\label{thm:soundness}
Assuming the zero-knowledge property of the Chaum-Pedersen proof system and the $q$-discrete log assumption in $G$, our punch card scheme has soundness (Definition~\ref{def:soundness}) in the algebraic group model with random oracles. 
\end{theorem}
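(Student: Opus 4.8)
The plan is to reduce soundness to the $q$-discrete log assumption in the algebraic group model with random oracles, via three stages: a zero-knowledge step that removes the server's dependence on $\ms{sk}$, a simulation of $\ms{SOUND}(\lambda,\calA)$ from a $q$-DL challenge, and a linear-algebraic analysis of the adversary's algebraic representations forcing $c_\ms{redeem}\le c_\ms{punch}$.

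First I would pass to a hybrid in which every Chaum--Pedersen proof given to $\calA$ (inside $\calO_\ms{punch}$ and $\calO_\ms{honPunch}$) is produced by the zero-knowledge simulator, programming the Fiat--Shamir random oracle; this is indistinguishable by a standard hybrid over the polynomially many proofs, and afterwards the server uses $\ms{sk}$ only through $p\mapsto p^{\ms{sk}}$. I would then build a reduction $\calB$: given $g^x,\dots,g^{x^q}$ for $q$ a fixed polynomial upper bound on the total number of $\ms{ServerPunch}$ calls, it sets $\ms{pk}=g^x$; answers each random-oracle query $u$ (and the internal query in $\ms{Issue}$) by a freshly sampled, recorded $g^{r_u}$; answers $\calO_\ms{punch}(P)$ by reading $P$'s representation to recover a known polynomial $f_P$ (of degree at most the current punch count) with $P=g^{f_P(x)}$ and returning $g^{x f_P(x)}$, computed from the challenge powers; and simulates the honest-user oracles directly, since an honest card after $c$ punches has the form $g^{s x^{c}}$ for a recorded fresh $s$, $\calO_\ms{corrupt}$ merely reveals the recorded secrets (and, as in the experiment, adds the card's $c_\ms{id}$ to $c_\ms{punch}$), and $\calO_\ms{honRedeem}$ runs the now-known verification. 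By induction every group element $\calA$ receives has a known discrete log that is a polynomial in $x$ of bounded degree, so the simulation is efficient and perfect.

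Next I would analyze the accepted $\calO_\ms{redeem}(u_i,p_i,n_i)$ queries. If $u_i$ was never queried to $H$ before, acceptance forces $f_{p_i}(x)=r_{u_i}x^{n_i}$ for an $r_{u_i}$ drawn after $f_{p_i}$ is fixed, which holds with probability $\le1/q$; so I may assume all redeemed $u_i$ were queried, and the $\ms{DB}$ check makes them distinct. For each such query either (clean case) $f_{p_i}(X)=r_{u_i}X^{n_i}$ identically in $\Z_q[X]$, or else $f_{p_i}(X)-r_{u_i}X^{n_i}$ is a nonzero polynomial of degree $\le q$ vanishing at $x$, whereupon $\calB$ factors it and outputs $x$, winning $q$-DL. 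Hence, up to a negligible loss, every accepted redemption is clean.

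The heart of the proof, and the step I expect to be the main obstacle, is to show that if every accepted redemption is clean then $\sum_i n_i\le c_\ms{punch}$, so that $c_\ms{redeem}\le c_\ms{punch}$ and $\ms{SOUND}$ outputs $0$. Treating the $r_v$ (for random-oracle queries) and the honest-card seeds as formal variables $R_v$ and using that $G$ has no pairing, every discrete log the adversary can form is a $\Z_q$-linear combination of logs it has seen, hence lies in the ``linear-in-$R$'' module $\Z_q[X]\oplus\bigoplus_v R_v\Z_q[X]$; writing $V$ for the $\Z_q$-span of all reachable logs, $V$ is built from $V_0=\ms{span}(1,X,\{R_v\})$ by steps $V_t=V_{t-1}+\Z_q\,Xe_t$ with $e_t\in V_{t-1}$, one step per adversarial or corrupted-card punch (honest-card punches on uncorrupted cards move only along ``honest-seed'' coordinates). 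A clean value-$n_i$ redemption on $u_i$ puts $R_{u_i}X^{n_i}\in V$; I would first prove the downward-closure lemma that $R_{u}X^{m}\in V$ implies $R_uX^{k}\in V$ for all $0\le k\le m$, by matching coefficients of $R_u$, of the other $R_v$, and of the $R$-free part in any representation of $R_uX^m$ and peeling off the top power of $X$. Projecting $V$ onto the coordinates of the redeemed $u_i$ would then give a subspace containing the $\sum_i(n_i+1)$ independent monomials $\{R_{u_i}X^{k}:0\le k\le n_i\}$, while the image of $V_0$ has dimension equal to the number of redemptions and each adversarial or corrupted-card punch raises this dimension by at most one; comparing counts yields $\sum_i n_i\le c_\ms{punch}$. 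The delicate points are setting up the formal framework so that the bound comes out as exactly $c_\ms{punch}$ rather than the larger total punch count, proving the downward-closure lemma, and handling adaptive corruption by charging a corrupted card's $c_\ms{id}$ honest punches to $c_\ms{punch}$ and thereafter treating the card like an adversarially punched one. A final Schwartz--Zippel argument over the uniform $r_v$ justifies passing between this formal-variable reasoning and concrete field elements at a negligible additive cost, and assembling all bounds gives $\Pr[\ms{SOUND}(\lambda,\calA)=1]\le\ms{negl}(\lambda)$.
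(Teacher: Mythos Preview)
Your proposal follows the same three-stage skeleton as the paper's proof: replace the Chaum--Pedersen proofs by simulated ones, embed the $q$-DL instance by programming the random oracle and answering punches via the algebraic representations, and then argue that a winning adversary yields a nontrivial polynomial relation in the hidden exponent $x$. The only cosmetic difference in the setup is that the paper programs $H(u)=g_1^{r}=g^{xr}$ (so random-oracle outputs already sit at ``level $1$''), whereas you program $H(u)=g^{r_u}$; both choices work.

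Where you genuinely go beyond the paper is in the final accounting step. The paper dispatches this in one sentence: from $c_\ms{redeem}>c_\ms{punch}$ it asserts that some accepted $p$ ``had not been previously punched $n$ times, i.e., the representation of $p$ does not contain $g_{n+1}$,'' and then reads off two representations of $g^{x^{n+1}}$. That pigeonhole is not fully justified there, since an algebraic adversary may mix contributions from several punch chains. Your formal-variable framework (treat the $r_u$ as indeterminates $R_u$, track the $\Z_q$-span $V$ of all reachable discrete logs, prove downward closure $R_uX^m\in V\Rightarrow R_uX^{m-1}\in V$, and compare dimensions after projecting onto the redeemed coordinates) is exactly the kind of argument needed to make that step rigorous, and your outlined proof of the downward-closure lemma (match the $R_u$-, $R_v$-, and $R$-free coefficients and peel off one factor of $X$) goes through. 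The paper's route is shorter to write; yours actually closes the gap.

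Two points you flagged as ``delicate'' deserve care when you write things out. First, for honest users, $\calO_\ms{honPunch}$ leaks both $p'$ and $p''$, whose discrete logs are $r_u m_{j}x^{j+1}$ and $r_u m_{j+1}x^{j+1}$; to stay in your linear-in-$R$ module you should treat each fresh product $r_u m_j$ as its own formal seed $S_j$ (uniform and unknown until corruption), rather than as $R_uM_j$. Second, at corruption you reveal $u$ and the current mask, which hands the adversary $R_uX^{c_\ms{id}}$; the $c_\ms{id}$ punches that produced it are correctly charged to $c_\ms{punch}$, and the earlier $S_j$-coordinates project to zero. With those bookkeeping choices your dimension count $\sum_i(n_i+1)\le \dim(\text{proj }V)\le |\{\text{redemptions}\}|+c_\ms{punch}$ lands exactly on $\sum_i n_i\le c_\ms{punch}$, as you claim.
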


\begin{proof}
Since $q$ already refers to the order of the group $G$, we will refer to the $N$-discrete log assumption throughout this proof. The high-level idea of the proof is to program random oracle queries with re-randomizations of powers of $g^x$ given by the $N$-discrete log challenger. Then, whenever a punch card is given by the adversary, the algebraic adversary must also give a representation of the punch card $p$ in terms of group elements it has seen before. As such, the challenger can pick out the $g^{x^i}$ component and replace it with $g^{x^{i+1}}$ in its response. Then a punch card that is accepted before receiving $n$ punches must include a second representation of $g^{x^n}$, allowing us to solve for $x$.  

We now formalize the proof idea sketched above. Our proof proceeds through a series of hybrids. 

\begin{itemize}
\item[\textbf{H0}:] This hybrid is the soundness experiment $\ms{SOUND}(\lambda, \calA)$. 

\item[\textbf{H1}:] In this hybrid, we replace each proof $\pi$ output by $\calO_\ms{punch}$ and $\calO_\ms{honPunch}$ with a simulated proof. 

The the zero-knowledge property of the Chaum-Pedersen proof guarantees that the proofs can be simulated. Since hybrids $H0$ and $H1$ are identical save for the real proofs in $H0$ and the simulated proofs in $H1$, the output of an adversary $\calA$ who distinguishes between $H0$ and $H1$ can also be used to distinguish between a real and simulated proof with the same advantage. This argument must be repeated for each proof given to the adversary. 

\item[\textbf{H2}:] In this hybrid, we add an abort condition to the execution of the experiment. The experiment aborts and outputs 0 if the $\calO_\ms{redeem}(\ms{psk}, p,n)$ oracle ever outputs  $b=1$ when it receives a value of $\ms{psk}$ that the adversary has not previously queried from the random oracle $H$ or been given as the result of a query to $\calO_\ms{corrupt}$. 

This hybrid is indistinguishable from $H1$ because the probability of an adversary successfully triggering this abort condition is negligible in $\lambda$ and there are no other differences between $H1$ and $H2$. In order for the $\calO_\ms{redeem}$ oracle to output $b=1$, it must be the case that $\ms{ServerVerify}(\ms{sk}, \ms{pk}, \ms{DB},\ms{psk},p,n)$ outputs 1, which means that $p=H(\ms{psk})^{\ms{sk}^n}$. 

But since $H$ is modeled as a random function and $(\ms{psk}, H(\ms{psk}))$ is not known to the adversary, its output appears uniformly random in $G$. This is the case even for the punch cards of uncorrupted honest users because the messages the adversary sees from their interactions with the server are always uniformly random group elements (because of the masks, see privacy proof) and simulated proofs from the server. But then $H(\ms{psk})^{\ms{sk}^n}$ is also distributed uniformly at random in $G$, and the probability $\ms{Pr[}p=H(\ms{psk})^{\ms{sk}^n}\ms{]}\leq\negl(\lambda)$. 

\item[\textbf{H3}:] In this hybrid, we modify how the challenger computes the output of $\ms{ServerPunch}(\ms{sk},\ms{pk}, p)$ and of the random oracle $H$. Recall that since $\calA$ is an algebraic adversary, every group element it sends is accompanied by a representation in terms of the previous group elements it has seen: the generator $g$, returned punch cards $p'_1,...,p'_Q$ for the $Q$ queries it has made to the $\calO_\ms{punch}$ oracle, and random oracle outputs $H_1,...,H_{Q'}$ for the $Q'$ random oracle queries it has made. 

Let $g_i=g^{\ms{sk}^i}$ for $i\in\Z$. Whenever the adversary $\calA$ or oracle $\calO_\ms{honIssue}()$ makes a call to the oracle $H$ on a previously unqueried point $u$, the challenger samples $r\getsr\Z_q$ and sets $H(u)\gets g_1^r$. Since $r$ is distributed uniformly at random in $\Z_q$, so is~$H(u)$. 

Next, whenever $\calA$ makes a call to the oracles $\calO_\ms{punch}(p)$ or $\calO_\ms{honPunch}(\ms{id})$, instead of setting $p'\gets p^\ms{sk}$, the challenger looks at the algebraic representation of $p$ and replaces each occurrence of $g_i$ with $g_{i+1}$, including replacing $g$ with $g_1$. Since the only elements $\calA$ has seen are $g$, random oracle outputs, and the previous results of $\calO_\ms{punch}$, the challenger can keep track of which elements contain which $g_i$ as it sends them to $\calA$. 
The outputs of $\calO_\ms{punch}(p)$ in $H3$ are identical to the outputs in $H2$, because the process described here results in the same group element $p'$ that would be represented by~$p^\ms{sk}$. 

We now additionally have the challenger keep track of algebraic representations of the group elements it produces itself (e.g., in honest users' punch cards). This makes no changes to the adversary's view in the experiment. 

Since all the changes in $H3$ result in identically distributed outputs as in $H2$, the two hybrids are indistinguishable.
\end{itemize}

From $H3$, we can prove that any algebraic adversary $\calA$ who wins the soundness game can be used by an algorithm $\calB$, described below, to break the $N$-discrete log assumption in $G$. Algorithm $\calB$ plays the role of the adversary in the $N$-discrete log game while simultaneously playing the role of the challenger in $H3$. Algorithm $\calB$ simulates $H3$ exactly to $\calA$, except that it uses the $N$-discrete log challenge messages $g^x, g^{x^2}, ..., g^{x^N}$ as the values of $g_i$. That is, $g_i=g^{x^i}$. Moreover, it sets $\ms{pk}\gets g_1$ in the setup phase. Observe that the $g_i$ are distributed identically as in $H3$, so this is a perfect simulation of $H3$ with $x$ playing the role of $\ms{sk}$. The value of $N$ required in the assumption depends on the maximum number of sequential punches $\calA$ requests on the same group element.  

Now, if $\calA$ wins the soundness game, it means that $c_\ms{redeem}>c_\ms{punch}$. This, in turn, implies that there was some successful punch card redemption $\ms{ServerVerify}$ where the accepted value of $p$ had not been previously punched $n$ times, i.e., the representation of $p$ does not contain $g_{n+1}$. But since successful verification requires that $p=H(\ms{psk})^\ms{x^{n}}=(g_1^r)^\ms{x^n}=g_{n+1}^r$, and the algebraic adversary $\calA$ must give a representation of $p$, we now have two different representations of $g_{n+1}=g^{x^{n+1}}$, which together yield a degree-$n+1$ equation in $x$. This equation can be solved for $x$ using standard techniques~\cite{shoupbook}, allowing $\calB$ to recover $x$ and win the $N$-discrete log game. 
\end{proof}

\section{Merging Punch Cards}\label{merge}

Having described our main construction, we now consider another feature sometimes enjoyed by physical punch cards that we may want to reproduce digitally: merging partially-filled cards. Just as in real life, it is possible to ``merge'' two punch cards by redeeming them separately and taking into account the sum of the number of punches across the two cards. However, this process reveals the number of punches held by each card at redemption time, information that the customer may want to hide. We can hide the value of the two cards being merged by resorting to pairings.  

\begin{definition}[Pairings~\cite{cryptobook}]
Let $G_0,G_1,G_T$ be three cyclic groups of prime order $q$ where $g_0\in G_0$ and $g_1\in G_1$ are generators. A \emph{pairing} is an efficiently computable function $e:G_0 \times G_1 \to G_T$ satisfying the following properties:
\begin{itemize}
\item Bilinear: for all $u, u'\in G_0$ and $v,v'\in G_1$ we have 
$$e(u\cdot u',v)=e(u,v)\cdot e(u',v)$$
$$\text{ and } $$
$$e(u, v\cdot v')=e(u,v)\cdot e(u,v')$$
\item Non-degenerate: $g_T\gets e(g_0,g_1)$ is a generator of $G_T$. 
\end{itemize}

When $G_0=G_1$, we say that the pairing is a \emph{symmetric pairing}. We refer to $G_0$ and $G_1$ as the \emph{pairing groups} and refer to $G_T$ as the \emph{target group}. 
\end{definition}

Using a symmetric pairing, we can quite simply merge two punch cards without revealing the number of punches on each. Before redeeming punch cards $p_0$ and $p_1$ which have $i$ and $j$ punches, respectively, with $i+j=n$, the client computes $p\gets e(p_0,p_1)$. To redeem a merged card, the client sends the server the merged punch card $p$ along with $u_0$ and $u_1$, the secrets for the two punch cards merged into $p$. The server checks that $p=e(H(u_0)^{\ms{sk}^n}, H(u_1))$. The bilinear property of the pairing ensures that $e\big(p_0, p_1\big)=e\big(H(u_0)^{\ms{sk}^i}, H(u_1)^{\ms{sk}^j}\big)=e\big(H(u_0)^{\ms{sk}^n}, H(u_1)\big)$. We can even hide whether or not a redeemed punch card is merged by generating a fresh punch card before redemption and merging a complete card with it. 

The performance of symmetric pairings is far worse than that of asymmetric pairings, so we would like to have a scheme that works for asymmetric pairings as well. Unfortunately, directly converting the idea above to asymmetric pairings meets with some difficulties. Since each punch card must belong to either $G_0$ or $G_1$, we can only merge pairs of cards where $p_0\in G_0$ and $p_1\in G_1$. But this is a decision that must be made when a card is first issued, restricting punch cards to being merged with cards that belong to the other pairing group. 

We resolve this problem by splitting each punch card into two components, one in each pairing group. Each component behaves as a punch card in the original scheme. Generating a punch card is similar to the original scheme, but the secret $u\getsr\zo^\lambda$ is hashed by two different functions $H_0:\zo^\lambda\to G_0$ and $H_1:\zo^\lambda\to G_1$. Each hole punch repeats the punch protocol of the original scheme twice, once in $G_0$ and once in $G_1$. Redeeming a card requires merging the $G_0$ and $G_1$ components of the two cards with each other as above, and since the client has a version of each punch card in both groups, it can merge them as before. 

We formalize this sketch of a solution below. We replace the $\ms{ClientRedeem}$ algorithm from our punch card syntax with a new $\ms{ClientMergeRedeem}$ algorithm that merges two punch cards before redeeming them. 

\begin{construction}[Mergeable Punch Card Scheme]
Let $G_0,G_1,G_T$ be groups of prime order $q$ with generators $g_0\in G_0,g_1\in G_1$, and let $H_0,H_1$ be hash functions $H_0:\zostar\to G_0, H_1:\zostar\to G_1$, modeled as random oracles. 
We construct our punch card scheme as follows:

\begin{itemize}
\item $\ms{ServerSetup}(1^\lambda)\rightarrow \ms{sk}, \ms{pk}, \ms{DB}$: Select random $\ms{sk}\getsr \Z_q$ and set $\ms{pk}_0\gets g_0^\ms{sk}\in G_0, \ms{pk}_1\gets g_1^\ms{sk}\in G_1$. Initialize \ms{DB} as an empty hash table, and return \ms{sk}, $\ms{pk}=(\ms{pk}_0,\ms{pk}_1)$, and \ms{DB}. 

\item $\ms{Issue}(1^\lambda)\rightarrow \ms{psk}, p$: First, select a random secret $u\getsr\zo^\lambda$ and random masking values $m_0\getsr \Z_q, m_1\getsr\Z_q$.  Then compute $p_0\gets H_0(u)^{m_0}\in G_0, p_1\gets H_1(u)^{m_1}\in G_1$. Let $\ms{psk}\gets(u,m_0,m_1)$. Return $\ms{psk}, p=(p_0,p_1)$.

\item $\ms{ServerPunch}(\ms{sk}, \ms{pk}, p) \rightarrow p', \pi$: First, interpret $\ms{pk}$ as $(\ms{pk}_0,\ms{pk}_1)$ and $p$ as $(p_0,p_1)$. Compute $p_0'\gets p_0^{\ms{sk}}, p_1'\gets p_1^{\ms{sk}}$ as well as the proofs of knowledge $\pi_0\gets ZKPK\{(\ms{sk}), \ms{pk}_0=g_0^\ms{sk}, p_0'=p_0^\ms{sk}\}$ and $\pi_1\gets ZKPK\{(\ms{sk}), \ms{pk}_1=g_1^\ms{sk}, p_1'=p_1^\ms{sk}\}$. Output $p'=(p'_0,p'_1),\pi=(\pi_0, \pi_1)$. 

\item $\ms{ClientPunch}(\ms{pk}, \ms{psk}, p, p', \pi) \rightarrow \ms{psk}', p'' \text{or} \bot$: First, interpret $\ms{psk}$ as $(u, m_0, m_1)$, $p$ as $(p_0,p_1)$, $p'$ as $(p_0',p_1')$, and $\pi$ as $(\pi_0,\pi_1)$ . Next, verify the proofs $\pi_0$ and $\pi_1$. If either verification fails, output $\bot$. 
Then sample new random masking values $m_0'\getsr\Z_q, m_1'\getsr\Z_q$ and compute $p_0''\gets (p_0')^{m_0'/m_0}, p_1''\gets(p_1'^{m_1'/m_1})$. Finally, output $\ms{psk}'=(u, m_0', m_1'), p''=(p_0'',p_1'')$. 

\item $\ms{ClientMergeRedeem}(\ms{psk}, p, \ms{psk}', p') \rightarrow \ms{psk}'', p''$: 
Begin by interpreting $\ms{psk}$ as $(u, m_0, m_1)$, $\ms{psk}'$ as $(u', m_0', m_1')$, $p$ as $(p_0,p_1)$, and $p'$ as $(p_0',p_1')$. 
Then compute $p''\gets e(p_0^{1/m_0}, (p_1')^{1/m_1'})\in G_T$. Return $\ms{psk}''=(u,u')$ and $p''$. 

\item $\ms{ServerVerify}(\ms{sk}, \ms{pk}, \ms{DB}, \ms{psk}, p, n) \rightarrow 1/0, \ms{DB}'$: Begin by interpreting $\ms{psk}$ as $(u, u')$. Then perform the following checks:
\begin{enumerate}
\item $p = e(H_0(u)^{\ms{sk}^n}, H_1(u'))$

\item $u\in\ms{DB}$

\item $u'\in\ms{DB}$
\end{enumerate}

If the first check returns true and the other checks return false, insert $u$ and $u'$ into \ms{DB} and return $1, \ms{DB}$. Otherwise, return $0, \ms{DB}$.  
\end{itemize}
\end{construction}

Although not included in our formal construction, our scheme could be extended to allow more punches to occur on a merged card so long as the client indicates that it is a merged card being punched and the punch/proof occur over elements in $G_T$. Note that this scheme only allows for two punch cards to merged. Our general strategy for merging punch cards could be extended to more than two cards using multilinear maps~\cite{BS02,GGH13,CLT13}, but a construction that allows merging of more than two cards while only relying on efficient standard primitives would require new techniques. This is an interesting problem for future work to address. 

We now state and prove our security theorems for the mergeable punch card scheme. The only change required in the security games to account for the change from $\ms{ClientRedeem}$ to $\ms{ClientMergeRedeem}$ is that the redeem oracle in the privacy game takes in two $\ms{id}$s instead of just one and passes both corresponding punch cards to $\ms{ClientMergeRedeem}$. 

\begin{theorem}
Assuming the existential soundness of the Chaum-Pedersen proof system, our mergeable punch card scheme has privacy in the random oracle model. 
\end{theorem}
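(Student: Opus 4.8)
The plan is to reuse the argument of Theorem~\ref{thm:privacy} almost verbatim, adjusting only for the two group components carried by each card and for the two-card input to the redeem oracle. First I would exhibit the simulator $\calS=(\calS_\ms{punch},\calS_\ms{redeem})$. On input $1^\lambda$, $\calS_\ms{punch}$ samples and outputs a uniformly random pair $(p_0,p_1)\getsr G_0\times G_1$. This already discharges the first two conditions of the privacy definition, since every card returned by $\ms{Issue}$ or by a successful $\ms{ClientPunch}$ is a pair of group elements, each raised to an independently and freshly sampled mask, hence distributed exactly as a uniform element of $G_0\times G_1$. The redeem simulator $\calS_\ms{redeem}(\ms{sk},c_\ms{id},c_{\ms{id}'})$, where $c_\ms{id}$ and $c_{\ms{id}'}$ are the punch counts the (modified) $\calO_\ms{redeem}$ has recorded for the two cards being merged, sets $n\gets c_\ms{id}+c_{\ms{id}'}$, samples $u,u'\getsr\zo^\lambda$, and outputs $\ms{psk}''=(u,u')$ together with $p''\gets e(H_0(u)^{\ms{sk}^n},H_1(u'))\in G_T$ (querying $H_0,H_1$ at the fresh points $u,u'$).

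Next I would run the same four-hybrid chain. Hybrid $H0$ is the real experiment $\ms{REALPRIV}$. In $H1$ we add an abort that fires whenever some $\ms{ClientPunch}$ call outputs $p''\neq\bot$ while one of the two DDH relations, $\ms{pk}_0=g_0^\ms{sk}\wedge p_0'=p_0^\ms{sk}$ or $\ms{pk}_1=g_1^\ms{sk}\wedge p_1'=p_1^\ms{sk}$, fails; indistinguishability from $H0$ follows from the existential soundness of the Chaum--Pedersen proof applied to whichever of $\pi_0,\pi_1$ is the offending proof, via the reduction that guesses the hybrid, plays challenger to $\calA$, and on an abort forwards the bad statement/proof pair to its soundness challenger, repeated for each of the polynomially many proofs. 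In $H2$ we switch the bookkeeping in $T$ to the $\ms{IDEALPRIV}$ style (tracking per-card punch counts) and replace $\ms{ClientMergeRedeem}$ by $\calS_\ms{redeem}$; conditioned on no abort the two experiments produce identically distributed outputs, since $u,u'$ are uniform in $\zo^\lambda$ in both worlds and, because after the $H1$ abort condition every accepted punch has genuinely exponentiated both components by $\ms{sk}$, the two unmasking steps inside $\ms{ClientMergeRedeem}$ yield $e(H_0(u)^{\ms{sk}^i},H_1(u')^{\ms{sk}^j})$, which by bilinearity equals $e(H_0(u)^{\ms{sk}^{i+j}},H_1(u'))$ with $i,j$ exactly the counts $c_\ms{id},c_{\ms{id}'}$ stored in $T$, i.e.\ precisely the value $\calS_\ms{redeem}$ outputs. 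Finally $H3$ removes the $H1$ abort (indistinguishable by Chaum--Pedersen soundness once more) and is literally $\ms{IDEALPRIV}(\lambda,\calA,\calS_\ms{redeem})$, which closes the chain.

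The only point that needs genuine care---the mild \emph{main obstacle}---is the distributional equivalence in $H2$ for the merged card. One has to check that the unmasking exponents $1/m_0$ and $1/m_1'$ are applied to the correct components (the $G_0$ part of the first card and the $G_1$ part of the second), that the two components of any single card always carry the same number of punches (they do, since each $\ms{ClientPunch}$ advances both in lockstep), and that the bilinearity of $e$ collapses the $\ms{sk}^i$ on the first argument and the $\ms{sk}^j$ on the second into a single $\ms{sk}^{i+j}$ on the first argument, matching both the check performed by $\ms{ServerVerify}$ and the element produced by $\calS_\ms{redeem}$. Once that identity is in place the remainder is identical to the single-card proof, and no assumptions beyond the existential soundness of Chaum--Pedersen and the random-oracle modeling of $H_0,H_1$ are needed.
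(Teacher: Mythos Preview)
Your proposal is correct and follows essentially the same approach as the paper. The only cosmetic differences are that the paper splits your single abort hybrid into two (one per pairing group) for a six-step chain, and writes the redeem simulator's output as $e(H_0(u)^{\ms{sk}^{c_\ms{id}}},H_1(u')^{\ms{sk}^{c_{\ms{id}'}}})$, which is of course equal to your $e(H_0(u)^{\ms{sk}^{c_\ms{id}+c_{\ms{id}'}}},H_1(u'))$ by bilinearity.
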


\begin{proof}[Proof (sketch)]
We begin by describing the simulator $\calS = (\calS_\ms{punch}, \calS_\ms{redeem})$.

\begin{itemize}
\item $\calS_\ms{punch}(1^\lambda)\rightarrow p$: This simulator samples and outputs two random group elements $p_0\getsr G_0$ and $p_1\getsr G_1$. 

\item $\calS_\ms{redeem}(\ms{sk}, c_\ms{id}, c_{\ms{id}'})\rightarrow \ms{psk}', p'$: This simulator samples two random strings $u\getsr\zo^\lambda, u'\getsr\zo^\lambda$ and computes $p'\gets e(H_0(u)^{\ms{sk}^{c_\ms{id}}}, H_1(u')^{\ms{sk}^{c_{\ms{id}'}}})$. It outputs $\ms{psk}\gets(u,u')$ and $p'$. 
\end{itemize}

The first two conditions of privacy are clearly met by $\calS_\ms{punch}$ because the punch card output by a successful call to $\ms{Issue}$ or $\ms{ClientPunch}$ is always two group elements raised to randomly chosen masks $m_0$ and $m_1$ (or $m_0'$ and $m_1'$), which will be distributed identically to randomly chosen group elements $p_0\getsr G_0, p_1\getsr G_1$. 

Next, we show through a short series of hybrids that $\ms{REALPRIV}(\lambda, \calA) \allowbreak\approx_c\allowbreak\ms{IDEALPRIV}(\lambda,\calA, \calS_\ms{redeem})$ for our mergeable punch card scheme. 
The rest of the proof of this theorem is very similar to that of Theorem~\ref{thm:privacy}. The main difference is that the soundness of the Chaum-Pedersen proof system needs to be invoked separately in each of $G_0$ and $G_1$. Thus we only sketch the steps of the hybrid argument below.

\begin{itemize}
\item[\textbf{H0}:] This hybrid is the real privacy experiment $\ms{REALPRIV}(\lambda, \calA)$.
\item[\textbf{H1}:] In this hybrid, we add an abort condition to the execution of the experiment. The experiment aborts and outputs 0 if \ms{ClientPunch} outputs $p''\neq\bot$ (i.e., it accepts the proofs $\pi_0$ and $\pi_1$) but it is not the case that $\ms{pk}_0=g_0^\ms{sk}\wedge p_0'=p_0^\ms{sk}$. 
\item[\textbf{H2}:] In this hybrid, we add an abort condition to the execution of the experiment. The experiment aborts and outputs 0 if \ms{ClientPunch} outputs $p''\neq\bot$ (i.e., it accepts the proofs $\pi_0$ and $\pi_1$) but it is not the case that $\ms{pk}_1=g_1^\ms{sk}\wedge p_1'=p_1^\ms{sk}$.

\item[\textbf{H3}:] In this hybrid, the challenger switches to record-keeping in the table $T$ in the way \ms{IDEALPRIV} does and replaces calls to \ms{ClientMergeRedeem} with calls to $\calS_\ms{redeem}$. 

\item[\textbf{H4}:] This hybrid is identical to $H3$ except the abort condition introduced in $H2$ is removed. 
\item[\textbf{H5}:] This hybrid is identical to $H4$ except the abort condition introduced in $H1$ is removed. It also corresponds to the ideal privacy game $\ms{IDEALPRIV}(\lambda, \calA, \calS)$, completing the proof. 

\end{itemize}

\end{proof}

Next, we prove the soundness of our scheme. Since our new scheme uses pairings, we use an \emph{asymmetric} $q$-discrete log assumption, which assumes the computational hardness of winning the following game.

\begin{definition}[asymmetric $q$-discrete log game]
The $q$-discrete log game for groups $G_0,G_1$ of prime order $p$ is played between a challenger $\calC$ and an adversary $\calA$. The challenger $\calC$ samples $x\getsr \Z_p$ and sends $g_0^x, g_0^{x^2},..., g_0^{x^q}, g_1^x, g_1^{x^2}, ..., g_1^{x^q}$ to $\calA$. The adversary $\calA$ responds with a value $z\in\Z_p$, and the challenger outputs 1 iff $z=x$. 
\end{definition}

\begin{theorem}
Assuming the zero-knowledge property of the Chaum-Pedersen proof system and the asymmetric $q$-discrete log assumption in $G_0$ and $G_1$, our mergeable punch card scheme has soundness (Definition~\ref{def:soundness}) in the algebraic group model with random oracles. 
\end{theorem}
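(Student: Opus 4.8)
The plan is to replay the proof of Theorem~\ref{thm:soundness} essentially verbatim, but now tracking the algebraic representation in both pairing groups and combining them through the pairing at the redemption step. I would use the same hybrid sequence. $H0$ is $\ms{SOUND}(\lambda,\calA)$. In $H1$ I replace each of the two proofs $\pi_0,\pi_1$ output by $\calO_\ms{punch}$ and $\calO_\ms{honPunch}$ with simulated proofs, with indistinguishability following from the zero-knowledge property of Chaum--Pedersen invoked separately in $G_0$ and in $G_1$. In $H2$ I add an abort: the experiment outputs $0$ if $\calO_\ms{redeem}((u,u'),p,n)$ ever accepts while at least one of $u,u'$ was never queried to its random oracle ($H_0$ resp.\ $H_1$) and never returned by $\calO_\ms{corrupt}$. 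This is justified because if, say, $H_1(u')$ is a fresh uniform element of $G_1$ unknown to $\calA$, then $v\mapsto e(H_0(u)^{\ms{sk}^n},v)$ is an injective homomorphism of prime-order groups (its kernel is trivial since $H_0(u)^{\ms{sk}^n}\neq 1$ with overwhelming probability), so $e(H_0(u)^{\ms{sk}^n},H_1(u'))$ is uniform in $G_T$ and matches the claimed $p$ with probability at most $1/q$; a symmetric argument covers the case where $u$ is unqueried, and this holds even for uncorrupted honest cards because every server message about them is a uniform group element plus a simulated proof. In $H3$ I program the oracles as in Theorem~\ref{thm:soundness}: on a fresh query $u$ I draw $r_0,r_1\getsr\Z_q$ and set $H_0(u)\gets (g_0^{\ms{sk}})^{r_0}$ and $H_1(u)\gets (g_1^{\ms{sk}})^{r_1}$, and I implement $\ms{ServerPunch}$ by reading the algebraic representation of each component and replacing every $g_0^{\ms{sk}^i}$ by $g_0^{\ms{sk}^{i+1}}$ in the $G_0$-component and every $g_1^{\ms{sk}^i}$ by $g_1^{\ms{sk}^{i+1}}$ in the $G_1$-component; all outputs are identically distributed, so $H3\equiv H2$.

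From $H3$ I would build a reduction $\calB$ against the asymmetric $N$-discrete log game, with $N$ a polynomial bound on the largest power of $\ms{sk}$ that can arise (set by the longest chain of sequential punches $\calA$ requests on one component, plus corrupted cards' counts). $\calB$ embeds the challenge by setting $g_0^{\ms{sk}^i}\gets g_0^{x^i}$, $g_1^{\ms{sk}^i}\gets g_1^{x^i}$, $\ms{pk}_0\gets g_0^x$, $\ms{pk}_1\gets g_1^x$; by $H3$ this is a perfect simulation with $x$ in the role of $\ms{sk}$. When $\calA$ wins we have $c_\ms{redeem}>c_\ms{punch}$, and a pigeonhole argument over the punch budget --- each $\ms{ServerPunch}$ raises the $x$-degree of a component by exactly one, $\calO_\ms{corrupt}$ transfers the accumulated degree into $c_\ms{punch}$, and a merged redemption with parameter $n$ consumes total degree $n$ on the first component plus $1$ on the second (the bilinear trade-off $i+j=n$) --- yields some accepting redemption $((u,u'),p,n)$ whose algebraic representation of $p\in G_T$, expanded as a product of pairings $e(A_i,B_j)$ of previously seen $G_0$- and $G_1$-elements and of previously seen $G_T$-elements (those leaked by $\calO_\ms{honRedeem}$), does not legitimately reach the true exponent polynomial. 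Since $H_0(u)$ and $H_1(u')$ were necessarily queried (else $H2$ aborts), the true value is $e(H_0(u)^{x^n},H_1(u'))=g_T^{\,r_0 r_1\,x^{n+2}}$ with $r_0 r_1\neq 0$, whereas the claimed representation is $g_T^{P(x)}$ for a polynomial $P$ of degree $<n+2$, or with a different degree-$(n+2)$ coefficient, all of whose coefficients $\calB$ knows (it chose every honest randomizer). Equating the two exponents gives a nonzero polynomial equation in $x$ of degree $O(N)$, which $\calB$ factors by standard techniques to recover $x$ and win the $N$-discrete log game.

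The step I expect to be the main obstacle is making this degree-tracking and pigeonhole argument rigorous in the algebraic group model for bilinear groups. One must argue that, across the interleaving of $\calO_\ms{punch}$, $\calO_\ms{honPunch}$, $\calO_\ms{corrupt}$, and $\calO_\ms{honRedeem}$, the only way the adversary's representation of a merged card can reach the degree-$(n+2)$ exponent $r_0 r_1 x^{n+2}$ with the correct coefficient is to have actually spent the full punch budget $i+j=n$ across the two component chains, so that $c_\ms{redeem}>c_\ms{punch}$ forces the representation to be "wrong" and hence triggers the reduction. The $G_T$ elements revealed by honest redemptions must be folded into this accounting, but because $\calB$ chose all honest cards' randomizers, those terms contribute only known, bounded-degree polynomials and cannot conspire to produce the fresh coefficient $r_0 r_1$ except with negligible probability over the random-oracle programming. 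Apart from this bookkeeping, the argument is a routine adaptation of Theorem~\ref{thm:soundness}.
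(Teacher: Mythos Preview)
Your proposal is correct and follows essentially the same approach as the paper's proof: the same hybrid sequence (you merge the two proof-simulation steps into one hybrid, which is harmless), the same random-oracle programming with powers of the challenge in both $G_0$ and $G_1$, and the same final extraction by obtaining two inequivalent polynomial representations of the target-group element and solving for $x$. You are, if anything, slightly more careful than the paper in explicitly flagging the $G_T$-elements leaked by $\calO_\ms{honRedeem}$ and in spelling out the degree-tracking pigeonhole argument; the paper's sketch elides both of these details.
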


\begin{proof}[Proof (sketch)]
Since $q$ already refers to the order of the groups $G_0,G_1,G_T$, we will refer to the asymmetric $N$-discrete log assumption throughout this proof. The majority of proof of this theorem is very similar to that of Theorem~\ref{thm:soundness}. The main difference in the hybrids is that several hybrids need to be repeated to account for each punch card being made up of two group elements instead of one. Thus we only sketch the steps of the hybrid argument below and focus on the last step of the argument. 

\begin{itemize}
\item[\textbf{H0}:] This hybrid is the soundness experiment $\ms{SOUND}(\lambda, \calA)$. 

\item[\textbf{H1}:] In this hybrid, we replace the proofs $\pi_0$ output by $\calO_\ms{punch}$ and $\calO_\ms{honPunch}$ with simulated proofs. 

\item[\textbf{H2}:] In this hybrid, we replace the proofs $\pi_1$ output by $\calO_\ms{punch}$ and $\calO_\ms{honPunch}$ with simulated proofs. 

\item[\textbf{H3}:] In this hybrid, we add an abort condition to the execution of the experiment. The experiment aborts and outputs 0 if the $\calO_\ms{redeem}(\ms{psk}, p,n)$ oracle ever outputs  $b=1$ when it receives a value of $u$ or $u'$ that the adversary has not previously queried from both random oracles $H_0$ and $H_1$ or been given as the result of a query to $\calO_\ms{corrupt}$. 

\item[\textbf{H4}:] In this hybrid, we modify how the challenger computes the output of $\ms{ServerPunch}(\ms{sk},\ms{pk}, p)$ and of the random oracle $H$. Let $g_{0,i}=g_{0}^{\ms{sk}^i}$ and $g_{1,i}=g_{1}^{\ms{sk}^i}$ for $i\in\Z$. Recall that since $\calA$ is an algebraic adversary, every group element it sends is accompanied by a representation in terms of the previous group elements it has seen. 
Just as we did in the proof of Theorem~\ref{thm:soundness}, instead of punching cards by raising $p_0^\ms{sk}$ and $p_1^\ms{sk}$, we examine the algebraic representation of $p_0,p_1$ submitted by the adversary and replace each instance of $g_{0,i}$ or $g_{1,i}$ with $g_{0,i+1}$ or $g_{1,i+1}$, respectively. Also, whenever the adversary $\calA$ makes a call to the oracles $H_j$ (for $j\in{0,1}$) on a previously unqueried point $u$, the challenger samples $r\getsr\Z_q$ and sets $H_j(u)\gets g_{j,1}^r$. 
\end{itemize}

From $H4$, we can prove that any algebraic adversary $\calA$ who wins the soundness game can be used by an algorithm $\calB$, described below, to break the $N$-discrete log assumption in either $G_0$ or $G_1$. Algorithm $\calB$ plays the role of the adversary in the asymmetric $N$-discrete log game while simultaneously playing the role of the challenger in $H4$. Algorithm $\calB$ simulates $H4$ exactly to $\calA$, except that it uses the asymmetric $N$-discrete log challenge messages $g_0^x, g_0^{x^2}, ..., g_0^{x^N}, g_1^x, g_1^{x^2}, ..., g_1^{x^N}$ as the values of $g_{0,i}$ and $g_{1,i}$. That is, $g_{0,i}=g_0^{x^i}$ and $g_{1,i}=g_1^{x^i}$. Moreover, it sets $\ms{pk}\gets (g_{0,1}, g_{1,1})$ in the setup phase. Observe that all $g_{0,i}$ and $g_{1,i}$ are distributed identically as in $H4$, so this is a perfect simulation of $H4$ with $x$ playing the role of $\ms{sk}$. The value of $N$ required in the assumption depends on the maximum number of sequential punches $\calA$ requests on the same group element.  

Now, if $\calA$ wins the soundness game, it means that $c_\ms{redeem}>c_\ms{punch}$. This, in turn, implies that there was some successful punch card redemption $\ms{ServerVerify}$ where the accepted value of $p$ had not been previously punched $n$ times between the two merged cards. Let $a$ and $b$ the number of times each of the two merged cards had been punched before redemption, so we have $a+b<n$. 

The successful verification requires that 
$$p = e(H_1(u)^{x^n}, H_2(u')) = e(g_{0,1}^{rx^n}, g_{1,1}^{r'}) = e(g_{0,1}^{rx^{a'}}, g_{1,1}^{r'x^{b'}})$$ 
for any $a',b'$ where $a'+b'=n$, and the algebraic adversary must give a representation of $p$ (it can include a pairing in this representation). It must be true that one of $a'$ or $b'$ is greater than $a$ or $b$, respectively, because $a+b<n$. Thus the representation of $p$ must not include either $g_{0,a'+1}^r$ or $g_{1,b'+1}^{r'}$ because one of those values will not have been given to $\calA$. This means we now have two different representations of one of these elements, which together yield a degree $a'+1$ or $b'+1$ equation in $x$. This equation can be solved for $x$ using standard techniques~\cite{shoupbook}, allowing $\calB$ to recover $x$ and win the asymmetric $N$-discrete log game. 
\end{proof}

\section{Implementation and Evaluation}\label{eval}

\begin{table*}\centering \normalsize
\begin{tabular}{lllllll}\toprule
&\ms{ServerSetup}&\ms{Issue}&\ms{ServerPunch}&\ms{ClientPunch}&\ms{ClientRedeem}&\ms{ServerVerify}\\\midrule
Computation Time (ms)& 0.019 & 0.304 & 0.134 & 4.314 & 0.890 & 0.064\\
Data Sent (Bytes)& 32 & 0 & 128 & 32 & 64 & 0\\\bottomrule
\end{tabular}
\vspace{.5em}
\caption{Computation and communication costs for our main punch card scheme. }
\label{fig:mainperf}
\end{table*}

\begin{table*}\centering \normalsize
\begin{tabular}{lllllll}\toprule
&\ms{ServerSetup}&\ms{Issue}&\ms{ServerPunch}&\ms{ClientPunch}&\ms{ClientMergeRedeem}&\ms{ServerVerify}\\\midrule
Computation Time (ms)& 1.09 & 34.97 & 4.33 & 137.79 & 36.43 & 4.00\\
Data Sent (Bytes)& 144 & 0 & 496 & 144 & 640 & 0\\\bottomrule
\end{tabular}
\vspace{.5em}
\caption{Computation and communication costs for our mergeable punch card scheme using pairings. }
\label{fig:mergeperf}
\end{table*}

We implemented our main punch card scheme from Section~\ref{punchcard} as well as the mergeable punch card scheme from Section~\ref{merge}. Our implementation is written in Rust with a Java wrapper to run the Rust code on Android devices. The implementation of the main punch card scheme relies on the \texttt{curve25519-dalek}~\cite{curve25519-dalek} crate which implements curve25519~\cite{bern06}, and the mergeable punch card scheme uses the \texttt{pairing-plus}~\cite{pairing-plus} crate, which provides an implementation of BLS12-381 curves~\cite{BLS12-381}. Our implementation and raw evaluation data are available at 
\url{https://github.com/SabaEskandarian/PunchCard}.

We carried out our evaluation with the client running on a Google Pixel (first generation) phone and the server running on a laptop with an Intel i5-8265U processor @ 1.60GHz. 
All data reported on our scheme comes from an average of at least 100 trials. \ms{ServerVerify} was run with $n=10$ punches on each redeemed punch card and an empty database \ms{DB} of used cards. We repeated the test of the main scheme with a database of 1,000,000 used cards and saw no significant difference between that and the test with an empty database, leading us to conclude that the hash table lookup does not dominate the cost of the \ms{ServerVerify} algorithm. 

Figure~\ref{fig:mainperf} shows the running time of each of the algorithms in our main punch card construction as well as the amount of in-protocol data sent by the party running the algorithm in a punch card system. The data sent in \ms{ServerSetup} refers to the size of the public key which must be communicated to clients. Observe that we do not require the client to communicate any data in order for a new punch card to be issued. The most costly operation, punching a card, requires less than 5ms between the client and server combined, and all other operations require less than 1ms. Communication is under 200 Bytes for all operations. 

Figure~\ref{fig:mergeperf} shows the same information for the mergeable punch card scheme. The mergeable scheme runs considerably more slowly than the main scheme. This is the result of 1) more work being required in the mergeable scheme, 2) group operations being more costly in pairing groups, and 3) the heavily optimized library used for curve25519 (\texttt{curve25519-dalek}) in the implementation of the main scheme. Group elements in pairing groups are also larger than in curve25519. In curve25519, the size of a group element $g\in G$ is 32 Bytes, but in the BLS12-381 curves, $g_0\in G_0$ requires 48 Bytes, $g_1\in G_1$ requires 96 Bytes, and $g_T\in G_T$ requires 576 Bytes.  

\paragraph{\newrev{Comparison to Blind Signatures}}\newrev{. Before presenting our empirical comparison to prior work, we will begin with a rough comparison of our approach to the blind signature-based scheme sketched in Section~\ref{intro}. Since exponentiations are the most expensive operation in both protocols, we will use them to roughly compare performance. Assuming the blind signature scheme is instantiated with blind Schnorr signatures~\cite{Schnorr89,Schnorr91}, it incurs only a few exponentiations per hole punch, comparable to our scheme. However, the client and server storage, as well as redemption time, will be proportional to the number of hole punches required to redeem a card. Whereas our scheme can be implemented with only one exponentiation required for a server to verify a punch card, the blind signature scheme would require two exponentiations per hole punch to verify each Schnorr signature. This means that our scheme outperforms the blind signature approach during redemption, even when only one hole punch is required to redeem a card. Our communication costs for redemption are also equivalent to the size of a single Schnorr signature, so communicating a single hole punch for redemption in the blind signature scheme, which involves sending a Schnorr signature and corresponding client-chosen secret, would already incur more communication than our scheme does regardless of the number of punches on a card. } 

\newrev{Moreover, during card redemption in the blind signature-based scheme, the server checks that all the secrets presented by the client have been signed and then adds each one to a database of previously redeemed secrets, checking for double-spending as it goes. As such, a card that requires 10 punches to be redeemed would introduce an order of magnitude difference between the two approaches, and this gap would increase linearly as the number of punches required grows. }

\newrev{The situation changes somewhat when comparing with our mergeable scheme, where the blind signature scheme does not require any modification because it is trivially mergeable. In this setting, the server-side database storage costs of our mergeable scheme are still smaller as long as a punch card requires more than two punches to be redeemed. On the other hand, hole punches will be noticeably slower because each punch operation in the mergeable scheme requires twice as many exponentiations as the main scheme. The fact that the mergeable scheme also requires using groups $G_0$ and $G_1$ that support a pairing means that each exponentiation will also be slower. In our experiments, $G_0$ exponentiations were about $5\times$ slower and $G_1$ exponentiations were about $16\times$ slower than in curve25519. }

\newrev{Redemption in the mergeable scheme also requires a pairing, which in our experiments were about $40\times$ slower than an exponentiation in curve25519. Despite exponentiations and pairings being significantly slower than those used in the blind signature scheme, we estimate, based on the relative performance of exponentiations and pairings in the different curves, that redemption in our mergeable scheme will still be faster on the server when a punch card requires $23$ or more punches to redeem, and will require less communication when a card requires $8$ or more punches to redeem.}

\begin{table*}\centering \normalsize
\begin{tabular}{llll}\toprule
&Issuing a Card&Punching a Card&Redeeming a Card\\\midrule
BBA+ scheme & 115.27 & 385.61 & 375.73\\
UACS scheme & 86 & 127 & 454\\
Bobolz~\etal scheme & 130 & 64 & 1254\\
BBW Scheme & 52 & 62 & 122 \\
Our main scheme & 0.304 (171.1$\times$ faster) & 4.448 (13.9$\times$ faster) & 0.954 (127.9$\times$ faster)\\
Our mergeable scheme& 34.97 (1.5$\times$ faster) & 142.12 & 40.43 (3.0$\times$ faster) \\\bottomrule
\end{tabular}
\vspace{.5em}
\caption{Computation time (in milliseconds) for our schemes and prior work. Speedups shown in parentheses refer to improvement over best prior work.}
\label{fig:computation}
\end{table*}

\begin{table*}\centering \normalsize
\begin{tabular}{llll}\toprule
&Issuing a Card&Punching a Card&Redeeming a Card\\\midrule
BBA+ scheme & 992 & 4048 & 3984\\
BBW scheme & 1005 & 1745 & 3502 \\
Our main scheme & 0 & 160 (10.9$\times$ reduction) & 64 (54.7$\times$ reduction) \\
Our mergeable scheme& 0 & 640 (2.7$\times$ reduction) & 640 (5.5$\times$ reduction)\\\bottomrule
\end{tabular}
\vspace{.5em}
\caption{Communication (in Bytes) in our schemes and prior works that report communication costs. Our schemes incur no communication to issue a new card and achieve order of magnitude improvements for other operations. The pairing-based scheme requires more communication because pairing group elements are larger and punching a card requires twice as many elements communicated. }
\label{fig:communication}
\end{table*}

\paragraph{Comparison to prior work}. We \newrev{now empirically} compare our punch card scheme to the \newrev{recent} loyalty systems BBA+~\cite{bbaplus}, UACS~\cite{uacs}, Bobolz~\etal~\cite{uacsplus}, and BBW~\cite{bbw}. We do not compare to the original BBA work~\cite{bba} because its performance is strictly worse than the works to which we do compare. 

We use the performance numbers reported by each prior work to which we compare. Performance numbers for UACS and Bobolz~\etal, were also recorded with a Google Pixel phone but used a computer with a stronger i7 processor. BBA+ only reports the client-side cost of each of its protocols and uses a OnePlus 3 phone. BBW uses an unspecified smartphone with a Snapdragon 845 processor (used, e.g., in the Pixel 3). In order to better capture the total cost of using each approach, we combine client and server costs to give the overall computation cost of each scheme. However, the distribution of cost between the client and server is similar for all works, with the mobile device incurring most of the computation cost. Since BBW measured its server-side costs on the same smartphone instead of an external server, we only count their client-side costs in our evaluation to ensure fairness. Moreover, BBW reports different configurations to achieve the best computation and communication costs, so we use the best reported number for each comparison. 

Figure~\ref{fig:computation} compares the performance numbers of our schemes against those of prior work. Our main scheme issues a card $171.1\times$ faster, punches a card $13.9\times$ faster, and redeems a card $127.9\times$ faster than the best prior work, BBW. The performance improvement comes from removing the reliance on pairings (in all works except BBW) and significantly reducing the number and complexity of zero knowledge proofs required in each operation compared to all prior works. \rev{We are able to do this by tailoring our solution more narrowly to supporting punch cards. Prior works require heavier zero-knowledge proofs to handle broader use cases, e.g, adding negative points, so focusing on punch cards allows us to do away with much of the complexity of prior solutions.}

Our mergeable punch card scheme outperforms prior work in almost every category despite using pairings. The BBW scheme punches cards $2.2\times$ faster than our scheme, but in return our scheme issues cards $1.5\times$ faster and redeems cards $3\times$ faster. An important difference to point out between our implementation and most prior work is that while our implementation is done with BLS12-381 curves, which provide 128 bits of security, all prior works except BBW use BN curves~\cite{BN05,bncurves} which only provide 100 bits of security~\cite{TB16,pairingcurves}.

Figure~\ref{fig:communication} compares the communication costs of our schemes with BBA+ and BBW, the only prior works to report the communication costs incurred by their implementation. Unlike all prior work, our scheme requires \emph{no communication} to issue a new card, and card punching and redemption require $10.9\times$ and $54.7\times$ less communication, respectively, than the best prior work. For the mergeable scheme, the improvements are reduced to $2.7\times$ and $5.5\times$, but even this scheme requires significantly less communication.

\section{Extensions} \label{extensions}

We now briefly discuss extensions to our main punch card scheme that can allow it to be used in a wider variety of applications. 

\paragraph{Multi-punches}. 
Some loyalty programs sometimes offer extra punches on their punch cards as a special promotion. Others don't use a punch card at all, opting instead for a system where different transactions earn varying numbers of points. Our punch card scheme can easily be extended to handle these situations by having the server raise $p$ to $\ms{sk}^t$, where $t$ is the number of points being awarded for a given transaction. 

Unfortunately, this kind of multi-punch raises a new security question. Most punch card schemes offer a fixed value $n$ at which point a card can be redeemed for some benefit, or perhaps a few values at which different kinds of rewards are unlocked. But the possibility of gaining more than one punch with a given transaction introduces the potential for a client to ``overshoot'' the required number of points. This does not pose an issue for functionality, because the client can just redeem a card with $n'>n$ punches and perhaps even get a new card with the remaining balance. However, this might introduce a privacy issue because the redemption reveals the total number of punches on a card, which is no longer always the exact same value for all clients. One way to eliminate this problem is to have the server send all possible values $p^\ms{sk}, p^{\ms{sk}^2},...,p^{\ms{sk}^t}$ when punching a card. This works well for settings where $t$ is small, e.g., a double-punch promotion. We leave the problem of an efficient solution for large $t$ for future work. 

\paragraph{Managing used card database size}. 
Our punch card scheme requires keeping a database \ms{DB} of used punch card secrets $u$, stored in a hash table in our implementation. While this does not pose a performance problem because of the amortized constant time lookup in the hash table, the storage cost increases over time. Although at 128 bits per secret, it would take a long time for storage costs to become prohibitive, a high-volume punch card program may wish for a plan to eventually remove old punch cards from the database without allowing double spending. 

One way to help reduce the long-term storage requirement is by adding extra information into the secret $u$. Since $u$ is ultimately passed through a hash function modeled as a random oracle, adding structured information before the random bits makes no difference in the security of the scheme (unless the structured information itself leaks something). Clients can be required to add an expiration date to the beginning of $u$. Then the card redemption would check whether the card being used is expired or not. To encourage clients to pick reasonable expiration dates, cards with expiration dates too far in the future could be rejected as well. Expiration dates used in $u$ could be standardized, e.g., to the first day of a given year, to prevent the date itself from leaking too much information about an individual customer's shopping habits. 

\paragraph{\rev{Proof of Redemption}}. \rev{Our scheme can easily be modified to allow servers to prove that a punch card has previously been redeemed, e.g., when declining to accept a customer's otherwise valid card. This can be achieved by having the card secret $u$ itself be derived as the hash of another \emph{redemption secret} \ms{rs} generated by the client during issuance and sent to the server at the end of redemption. To prove that a card has previously been redeemed, the server sends back \ms{rs} to the client before this last step. Including this functionality adds an additional hash to card issuance and redemption as well as an additional communication round during redemption, but it may be desirable in situations where clients have reason to believe that a server may try to dishonestly reject valid punch cards.}

\paragraph{Private ticketing}. 
Our punch card scheme can also be viewed as a scheme for private ticketing, or, more generally, as a one-time use keyed-verification anonymous credential~\cite{CMZ14}. To issue a ticket, the client generates a new punch card, and the server punches it. A ticket can reflect additional information (e.g., if a train ticket is first class or coach, which transit zones a ticket is valid for, etc.) by the number of punches added to the ticket. To record multiple pieces of information on the same ticket, the random oracle $H$ can be used to generate multiple group elements from $u$, each of which can hold a different number of punches. Since the punches cannot be linked to their redemption, a client can later present the ticket without linking it to the issuance process.

\section{Related Work}\label{related}

In principle, the problem of privacy-preserving punch cards can be approached with a number of different techniques. One approach to this problem is via standard anonymous credential techniques~\cite{Chaum85,CL01,CL04}. Ecash systems~\cite{CHL05,BCKL09} or even the uCentive system~\cite{ucentive}, which is specifically designed for loyalty programs, can be used to give a customer an unlinkable token for each purchase. However, storage and computation costs to hold and redeem a token in these systems must be linear in the number of ``hole punches'' a customer acquires. Cryptographic techniques that could be used to accumulate constant-sized tokens do exist (e.g.,~\cite{GCZY13, RBHP15}), but they rely on pairings and incur costs higher than the protocols presented in this work. 

The line of work which this paper follows begins with the Black Box Accumulation of Jager and Rupp~\cite{bba}. This work introduced the notion of a constant-sized wallet that could accumulate points without being linkable, but used heavier tools than more recent work and offered weaker security. In particular, although accumulation of individual points was unlinkable in this scheme, the processes of issuing and removing points were linkable. 

This limitation was removed in subsequent works BBA+~\cite{bbaplus} and UACS~\cite{bbaplus}, which both strengthened security definitions and improved performance compared to BBA. These systems also added new features, such as partial spending of points, which enables new applications not covered by the original BBA model. Bobolz~\etal~\cite{uacsplus} further refine UACS, dramatically reducing the time to add points by pushing most costly operations into the redemption phase. Moreover, their scheme can recover from failures that occur mid-protocol. Our scheme trivially satisfies this requirement because the protocol only requires one message from each party and improves the performance of issuing, punching, and redeeming compared to prior work. 

BBW~\cite{bbw}, much like this work, improves on the performance of BBA+ by removing the reliance on pairings and using zero-knowledge range proofs, instantiated with Bulletproofs~\cite{bulletproofs}, instead. Our work focuses more narrowly on the punch card scenario, without supporting addition of negative points, but it shows how we can achieve order of magnitude improvements even over BBW and dispense with all but the simplest of proofs in this setting.

\section{Conclusion}\label{concl}

We have presented a new scheme for punch card loyalty programs that significantly outperforms all prior work both quantitatively and qualitatively. Our scheme does not require any server interaction for a client to receive a punch card, does not require pairings, and outperforms prior work in card issuance, punching, and redemption by $171\times$, $14\times$, and $128\times$ respectively, strictly dominating the performance of all prior solutions to this problem. We have also shown several extensions to our main scheme, including a modified protocol that allows merging punch cards (using pairings) that still outperforms prior work. Our implementation is open source and available at 
\url{https://github.com/SabaEskandarian/PunchCard}.

\section*{Acknowledgments}

I would like to thank Dan Boneh for several helpful conversations, as well as my shepherd Dan Roche and the anonymous reviewers for helpful suggestions that improved the final draft of this paper. This work was funded by NSF, DARPA, a grant from ONR, and the Simons Foundation. Opinions, findings and conclusions or recommendations expressed in this material are those of the authors and do not necessarily reflect the views of DARPA.


\bibliographystyle{plain}
\bibliography{paper}

\begin{thebibliography}{10}

\bibitem{BN05}
Paulo S. L.~M. Barreto and Michael Naehrig.
\newblock Pairing-friendly elliptic curves of prime order.
\newblock In {\em Selected Areas in Cryptography, 12th International Workshop,
  {SAC} 2005, Kingston, ON, Canada, August 11-12, 2005, Revised Selected
  Papers}, pages 319--331, 2005.

\bibitem{BCKL09}
Mira Belenkiy, Melissa Chase, Markulf Kohlweiss, and Anna Lysyanskaya.
\newblock Compact e-cash and simulatable vrfs revisited.
\newblock In {\em Pairing-Based Cryptography - Pairing 2009, Third
  International Conference, Palo Alto, CA, USA, August 12-14, 2009,
  Proceedings}, pages 114--131, 2009.

\bibitem{BR93}
Mihir Bellare and Phillip Rogaway.
\newblock Random oracles are practical: {A} paradigm for designing efficient
  protocols.
\newblock In {\em {CCS}}, 1993.

\bibitem{bern06}
Daniel~J. Bernstein.
\newblock Curve25519: New diffie-hellman speed records.
\newblock In {\em Public Key Cryptography - {PKC} 2006, 9th International
  Conference on Theory and Practice of Public-Key Cryptography, New York, NY,
  USA, April 24-26, 2006, Proceedings}, pages 207--228, 2006.

\bibitem{uacs}
Johannes Bl{\"{o}}mer, Jan Bobolz, Denis Diemert, and Fabian Eidens.
\newblock Updatable anonymous credentials and applications to incentive
  systems.
\newblock In {\em Proceedings of the 2019 {ACM} {SIGSAC} Conference on Computer
  and Communications Security, {CCS} 2019, London, UK, November 11-15, 2019},
  pages 1671--1685, 2019.

\bibitem{uacsplus}
Jan Bobolz, Fabian Eidens, Stephan Krenn, Daniel Slamanig, and Christoph
  Striecks.
\newblock Privacy-preserving incentive systems with highly efficient
  point-collection.
\newblock In {\em Asia {CCS}}, 2020.

\bibitem{cryptobook}
Dan Boneh and Victor Shoup.
\newblock {\em A Graduate Course in Applied Cryptography (version~{0.5})}.
\newblock 2020.
\newblock \url{https://cryptobook.us}.

\bibitem{BS02}
Dan Boneh and Alice Silverberg.
\newblock Applications of multilinear forms to cryptography.
\newblock {\em {IACR} Cryptology ePrint Archive}, 2002:80, 2002.

\bibitem{BLS12-381}
Sean Bowe.
\newblock {BLS12-381:} new zk-snark elliptic curve construction.
\newblock \url{https://electriccoin.co/blog/new-snark-curve/}, 2017.

\bibitem{BG04}
Daniel R.~L. Brown and Robert~P. Gallant.
\newblock The static diffie-hellman problem.
\newblock {\em {IACR} Cryptology ePrint Archive}, 2004:306, 2004.

\bibitem{article2}
Meta~S. Brown.
\newblock Loyalty programs and data mining.

\bibitem{bulletproofs}
Benedikt B{\"{u}}nz, Jonathan Bootle, Dan Boneh, Andrew Poelstra, Pieter
  Wuille, and Gregory Maxwell.
\newblock Bulletproofs: Short proofs for confidential transactions and more.
\newblock In {\em 2018 {IEEE} Symposium on Security and Privacy, {SP} 2018,
  Proceedings, 21-23 May 2018, San Francisco, California, {USA}}, pages
  315--334, 2018.

\bibitem{CHL05}
Jan Camenisch, Susan Hohenberger, and Anna Lysyanskaya.
\newblock Compact e-cash.
\newblock In {\em Advances in Cryptology - {EUROCRYPT} 2005, 24th Annual
  International Conference on the Theory and Applications of Cryptographic
  Techniques, Aarhus, Denmark, May 22-26, 2005, Proceedings}, pages 302--321,
  2005.

\bibitem{CL01}
Jan Camenisch and Anna Lysyanskaya.
\newblock An efficient system for non-transferable anonymous credentials with
  optional anonymity revocation.
\newblock In {\em Advances in Cryptology - {EUROCRYPT} 2001, International
  Conference on the Theory and Application of Cryptographic Techniques,
  Innsbruck, Austria, May 6-10, 2001, Proceeding}, pages 93--118, 2001.

\bibitem{CL04}
Jan Camenisch and Anna Lysyanskaya.
\newblock Signature schemes and anonymous credentials from bilinear maps.
\newblock In {\em Advances in Cryptology - {CRYPTO} 2004, 24th Annual
  International CryptologyConference, Santa Barbara, California, USA, August
  15-19, 2004, Proceedings}, pages 56--72, 2004.

\bibitem{CS97}
Jan Camenisch and Markus Stadler.
\newblock Efficient group signature schemes for large groups (extended
  abstract).
\newblock In {\em Advances in Cryptology - {CRYPTO} '97, 17th Annual
  International Cryptology Conference, Santa Barbara, California, USA, August
  17-21, 1997, Proceedings}, pages 410--424, 1997.

\bibitem{UC}
Ran Canetti.
\newblock Universally composable security: {A} new paradigm for cryptographic
  protocols.
\newblock In {\em 42nd Annual Symposium on Foundations of Computer Science,
  {FOCS} 2001, 14-17 October 2001, Las Vegas, Nevada, {USA}}, pages 136--145.
  {IEEE} Computer Society, 2001.

\bibitem{UCJournal}
Ran Canetti.
\newblock Universally composable security.
\newblock {\em J. {ACM}}, 67(5):28:1--28:94, 2020.

\bibitem{CMZ14}
Melissa Chase, Sarah Meiklejohn, and Greg Zaverucha.
\newblock Algebraic macs and keyed-verification anonymous credentials.
\newblock In {\em Proceedings of the 2014 {ACM} {SIGSAC} Conference on Computer
  and Communications Security, Scottsdale, AZ, USA, November 3-7, 2014}, pages
  1205--1216, 2014.

\bibitem{Chaum82}
David Chaum.
\newblock Blind signatures for untraceable payments.
\newblock In {\em Advances in Cryptology: Proceedings of {CRYPTO} '82, Santa
  Barbara, California, USA, August 23-25, 1982}, pages 199--203, 1982.

\bibitem{Chaum85}
David Chaum.
\newblock Security without identification: Transaction systems to make big
  brother obsolete.
\newblock {\em Commun. {ACM}}, 28(10):1030--1044, 1985.

\bibitem{CP92}
David Chaum and Torben~P. Pedersen.
\newblock Wallet databases with observers.
\newblock In {\em Advances in Cryptology - {CRYPTO} '92, 12th Annual
  International Cryptology Conference, Santa Barbara, California, USA, August
  16-20, 1992, Proceedings}, pages 89--105, 1992.

\bibitem{Cheon06}
Jung~Hee Cheon.
\newblock Security analysis of the strong diffie-hellman problem.
\newblock In {\em Advances in Cryptology - {EUROCRYPT} 2006, 25th Annual
  International Conference on the Theory and Applications of Cryptographic
  Techniques, St. Petersburg, Russia, May 28 - June 1, 2006, Proceedings},
  pages 1--11, 2006.

\bibitem{CLT13}
Jean{-}S{\'{e}}bastien Coron, Tancr{\`{e}}de Lepoint, and Mehdi Tibouchi.
\newblock Practical multilinear maps over the integers.
\newblock In {\em Advances in Cryptology - {CRYPTO} 2013 - 33rd Annual
  Cryptology Conference, Santa Barbara, CA, USA, August 18-22, 2013.
  Proceedings, Part {I}}, pages 476--493, 2013.

\bibitem{DH76}
Whitfield Diffie and Martin~E. Hellman.
\newblock New directions in cryptography.
\newblock {\em {IEEE} Trans. Information Theory}, 22(6):644--654, 1976.

\bibitem{FS86}
Amos Fiat and Adi Shamir.
\newblock How to prove yourself: Practical solutions to identification and
  signature problems.
\newblock In {\em {CRYPTO}}, 1986.

\bibitem{FIPR05}
Michael~J. Freedman, Yuval Ishai, Benny Pinkas, and Omer Reingold.
\newblock Keyword search and oblivious pseudorandom functions.
\newblock In {\em Theory of Cryptography, Second Theory of Cryptography
  Conference, {TCC} 2005, Cambridge, MA, USA, February 10-12, 2005,
  Proceedings}, pages 303--324, 2005.

\bibitem{FKL18}
Georg Fuchsbauer, Eike Kiltz, and Julian Loss.
\newblock The algebraic group model and its applications.
\newblock In {\em Advances in Cryptology - {CRYPTO} 2018 - 38th Annual
  International Cryptology Conference, Santa Barbara, CA, USA, August 19-23,
  2018, Proceedings, Part {II}}, pages 33--62, 2018.

\bibitem{GGH13}
Sanjam Garg, Craig Gentry, and Shai Halevi.
\newblock Candidate multilinear maps from ideal lattices.
\newblock In {\em Advances in Cryptology - {EUROCRYPT} 2013, 32nd Annual
  International Conference on the Theory and Applications of Cryptographic
  Techniques, Athens, Greece, May 26-30, 2013. Proceedings}, pages 1--17, 2013.

\bibitem{pairing-plus}
Jack Grigg, Riad Wahby, Sean Bowe, and Zhenfei Zhang.
\newblock pairing-plus.
\newblock \url{https://github.com/algorand/pairing-plus}, 2020.

\bibitem{GS08}
Jens Groth and Amit Sahai.
\newblock Efficient non-interactive proof systems for bilinear groups.
\newblock In {\em Advances in Cryptology - {EUROCRYPT} 2008, 27th Annual
  International Conference on the Theory and Applications of Cryptographic
  Techniques, Istanbul, Turkey, April 13-17, 2008. Proceedings}, pages
  415--432, 2008.

\bibitem{GCZY13}
Nan Guo, Jianju Cheng, Bin Zhang, and Kangbin Yim.
\newblock Aggregate signature-based efficient attributes proof with
  pairing-based anonymous credential.
\newblock In {\em 16th International Conference on Network-Based Information
  Systems, NBiS 2013, Seo-gu, Gwangju, CA, Korea (South), September 4-6, 2013},
  pages 276--281, 2013.

\bibitem{article1}
Sarita Harbour.
\newblock What are the pros and cons of punch cards for your small business?
\newblock 2017.

\bibitem{bbaplus}
Gunnar Hartung, Max Hoffmann, Matthias Nagel, and Andy Rupp.
\newblock {BBA+:} improving the security and applicability of
  privacy-preserving point collection.
\newblock In {\em Proceedings of the 2017 {ACM} {SIGSAC} Conference on Computer
  and Communications Security, {CCS} 2017, Dallas, TX, USA, October 30 -
  November 03, 2017}, pages 1925--1942, 2017.

\bibitem{p4tc}
Max Hoffmann, Valerie Fetzer, Matthias Nagel, Andy Rupp, and Rebecca Schwerdt.
\newblock {P4TC} - provably-secure yet practical privacy-preserving toll
  collection.
\newblock {\em PoPETs}, 2020(3):62--152, 2020.

\bibitem{bbw}
Max Hoffmann, Michael Kloo{\ss}, Markus Raiber, and Andy Rupp.
\newblock Black-box wallets: Fast anonymous two-way payments for constrained
  devices.
\newblock {\em PoPETs}, 2020(1):165--194, 2020.

\bibitem{bba}
Tibor Jager and Andy Rupp.
\newblock Black-box accumulation: Collecting incentives in a privacy-preserving
  way.
\newblock {\em PoPETs}, 2016(3):62--82, 2016.

\bibitem{bncurves}
K.~Kasamatsu, S.~Kanno, T.~Kobayashi, and Y.~Kawahara.
\newblock Barreto-naehrig curves.
\newblock Technical report, 2 2014.

\bibitem{TB16}
Taechan Kim and Razvan Barbulescu.
\newblock Extended tower number field sieve: {A} new complexity for the medium
  prime case.
\newblock In {\em Advances in Cryptology - {CRYPTO} 2016 - 36th Annual
  International Cryptology Conference, Santa Barbara, CA, USA, August 14-18,
  2016, Proceedings, Part {I}}, pages 543--571, 2016.

\bibitem{article3}
Ravie Lakshmanan.
\newblock Loyalty programs cost you your personal data -- are the rewards worth
  it?
\newblock 2019.

\bibitem{curve25519-dalek}
Isis~Agora Lovecruft and Henry de~Valence.
\newblock curve25519-dalek.
\newblock \url{https://github.com/dalek-cryptography/curve25519-dalek}, 2019.

\bibitem{ucentive}
Milica Milutinovic, Italo Dacosta, Andreas Put, and Bart~De Decker.
\newblock ucentive: An efficient, anonymous and unlinkable incentives scheme.
\newblock In {\em 2015 {IEEE} TrustCom/BigDataSE/ISPA, Helsinki, Finland,
  August 20-22, 2015, Volume 1}, pages 588--595, 2015.

\bibitem{NR04}
Moni Naor and Omer Reingold.
\newblock Number-theoretic constructions of efficient pseudo-random functions.
\newblock {\em J. {ACM}}, 51(2):231--262, 2004.

\bibitem{PS16}
David Pointcheval and Olivier Sanders.
\newblock Short randomizable signatures.
\newblock In {\em Topics in Cryptology - {CT-RSA} 2016 - The Cryptographers'
  Track at the {RSA} Conference 2016, San Francisco, CA, USA, February 29 -
  March 4, 2016, Proceedings}, pages 111--126, 2016.

\bibitem{RBHP15}
Andy Rupp, Foteini Baldimtsi, Gesine Hinterw{\"{a}}lder, and Christof Paar.
\newblock Cryptographic theory meets practice: Efficient and privacy-preserving
  payments for public transport.
\newblock {\em {ACM} Trans. Inf. Syst. Secur.}, 17(3):10:1--10:31, 2015.

\bibitem{pairingcurves}
Y.~Sakemi, Lepidum, T.~Kobayashi, and T.~Saito.
\newblock Pairing-friendly curves.
\newblock Technical report, 3 2020.

\bibitem{Schnorr89}
Claus{-}Peter Schnorr.
\newblock Efficient identification and signatures for smart cards.
\newblock In {\em Advances in Cryptology - {CRYPTO} '89, 9th Annual
  International Cryptology Conference, Santa Barbara, California, USA, August
  20-24, 1989, Proceedings}, pages 239--252, 1989.

\bibitem{Schnorr91}
Claus{-}Peter Schnorr.
\newblock Efficient signature generation by smart cards.
\newblock {\em J. Cryptol.}, 4(3):161--174, 1991.

\bibitem{Shoup97}
Victor Shoup.
\newblock Lower bounds for discrete logarithms and related problems.
\newblock In {\em Advances in Cryptology - {EUROCRYPT} '97, International
  Conference on the Theory and Application of Cryptographic Techniques,
  Konstanz, Germany, May 11-15, 1997, Proceeding}, pages 256--266, 1997.

\bibitem{shoupbook}
Victor Shoup.
\newblock {\em A computational introduction to number theory and algebra}.
\newblock Cambridge University Press, 2006.

\end{thebibliography}


\end{document}